\DeclareMathOperator{\rank}{rank} %
\DeclareMathOperator{\diag}{diag} %
\DeclareMathOperator{\im}{Im} %
\newcounter{local}
\renewcommand\theenumi{\protect\setcounter{local}%
  {171+\the\value{enumi}}\protect\ding{\value{local}}}
\theoremstyle{definition}
\newtheorem{thm}{Theorem}
\newtheorem{definition}{Definition}
\newtheorem{lem}{Lemma}
\newtheorem{rmk}{Remark}
\newtheorem{exam}{Example}
\newtheorem{coro}{Corollary}
\DeclareMathAlphabet{\mathpzc}{OT1}{pzc}{m}{it}
\begin{document}
\begin{frontmatter}
\title{Linear quantum systems with  diagonal passive Hamiltonian and a single dissipative channel\tnoteref{footnoteinfo}}
\tnotetext[footnoteinfo]{This work was supported by the Australian Research Council and the
Australian Academy of Science.}
\author[Australia]{Shan Ma}\ead{shanma.adfa@gmail.com}
\author[Australia]{Ian R. Petersen}\ead{i.r.petersen@gmail.com}
\author[Australia]{Matthew J. Woolley}\ead{m.woolley@adfa.edu.au}
\address[Australia]{School of Engineering and Information Technology, University of New South Wales at the Australian Defence Force Academy, Canberra, ACT 2600, Australia}

\begin{abstract}
Given any covariance matrix corresponding to a so-called pure Gaussian state, a linear quantum system can be designed to achieve the assigned covariance matrix. In most cases, however, one might obtain a system that is difficult to realize in practice.  In this paper, we restrict our attention to a special class of linear quantum systems, i.e., systems with diagonal passive Hamiltonian and a single dissipative channel.  The practical implementation of such a system would be relatively simple. We then parametrize the class of pure Gaussian state covariance matrices that can be achieved by this particular type of linear quantum system. 
\end{abstract}

\begin{keyword}
Linear quantum system, Covariance assignment, Pure Gaussian state, Diagonal Hamiltonian, Passive Hamiltonian, Single dissipative channel.  
\end{keyword}
\end{frontmatter}


\section{Introduction}
 The  covariance assignment problem was first explicitly described in~\cite{HS87:ijc}; since then it has been extensively studied in the stochastic control literature, e.g.,~\cite{CS87:tac,WD90:tac,GS97:auto,WHH01:tsp}. The motivation for covariance assignment comes from the fact that many stochastic systems have performance goals that are expressed in terms of the variances (or covariance matrix) of the system states. So by assigning an appropriate matrix value to the system state covariance, desired performance goals could be achieved.  The theory of covariance assignment has applications in model reduction, system identification, and state estimation. 

An analogous idea of covariance assignment has recently been used in the context of quantum systems for the generation of pure Gaussian states~\cite{KY12:pra,Y12:ptrsa,IY13:pra,MWPY14:msc}. A pure state is a state about which we can have the maximum amount of knowledge allowed by quantum mechanics. Pure 
Gaussian states are key resources for continuous-variable quantum information processing and quantum computation~\cite{BP03:book}. A Gaussian state (with zero mean) is completely specified by its covariance matrix. Mathematically, an $N$-mode Gaussian state is pure if and only if its
covariance matrix $V$ meets $\det(V)=2^{-2N}$. The generation of a zero-mean pure Gaussian state is in fact a covariance assignment problem where the goal is to construct  a linear quantum system that is strictly stable and achieves  a steady-state covariance matrix  corresponding to a desired pure Gaussian state. Unlike the classical covariance assignment problem, which typically involves designing feedback controllers, 
pure Gaussian states may be generated via synthesis of a linear quantum system that achieves an assigned covariance matrix, without using any explicit feedback control.
  
It was shown in references~\cite{KY12:pra,Y12:ptrsa,IY13:pra, MWPY14:msc} that given any pure Gaussian state, a linear quantum system can be designed to achieve the covariance matrix corresponding to this pure Gaussian state. Furthermore,  parametrizations of the system Hamiltonian $\hat{H}$ and the dissipation $\hat{L}$  were also developed for the system; see~\cite{KY12:pra,Y12:ptrsa} for details. According to this result, for most pure Gaussian states, linear quantum systems that generate pure Gaussian states might have a complex structure that is difficult to realize in practice. The system Hamiltonian $\hat{H}$ may have too many couplings or the dissipation $\hat{L}$ contains too many dissipative processes. Thus, one would like to  characterize the class of pure Gaussian states that can be generated by a simple class of linear quantum systems. 

In this paper, we impose some constraints on a  linear quantum system to ensure that the resulting system is relatively easy to realize. The first constraint is
imposed on the system Hamiltonian $\hat{H}$; we assume that the system  Hamiltonian $\hat{H}$ is of the form $\hat{H}=\sum_{j=1}^{N}  \frac{\omega_{j}}{2}\left(\hat{q}_{j}^{2}+\hat{p}_{j}^{2}\right)$, where each $\omega_{j}$ can be an arbitrary real number and $(\hat{q}_{j},\; \hat{p}_{j})$ are the position and momentum operators for the $j$th mode. This describes a collection of $N$ independent quantum harmonic oscillators.  If we write the system Hamiltonian $\hat{H}$ as a quadratic form, i.e., $\hat{H}=\frac{1}{2}\hat{x}^{\top}G\hat{x}$,
where $\hat{x}=\left[\hat{q}_{1}\;\cdots\; \hat{q}_{N}\;\hat{p}_{1}\;\cdots\;\hat{p}_{N}\right]^{\top}$ and $G=G^{\top}\in \mathbb{R}^{2N\times 2N}$, we will find that $G$ is a diagonal matrix, describing a passive system Hamiltonian~\cite{P11:auto,P12:scl,GY16:tac,GZ15:auto}. 
Another constraint is imposed on the dissipation $\hat{L}$; we assume that the system is only coupled to a single designed dissipative environment.  Under the constraints above,  the resulting linear quantum system will be relatively easy to implement in practice. We then parametrize the class of pure Gaussian states that can be generated by this particular type of  linear quantum system. 

The remainder of the paper is organized as
follows. In Section~\ref{PRELIMINARIES}, we review some relevant properties of pure Gaussian states and recall some recent results on  covariance assignment corresponding to pure Gaussian states. In Section~\ref{Constraints}, we give the system constraints explicitly. We also explain their physical meanings. In Section~\ref{Characterization}, we parametrize the class of pure Gaussian states that can be  generated by linear quantum systems subject to the constraints described in Section~\ref{Constraints}. In Section~\ref{Example}, we provide three examples to illustrate the main results. In these examples, we also consider the impact on the system of uncontrolled couplings to additional dissipative environments.   Section~\ref{Conclusion} concludes the paper.

\textit{Notation.} For a matrix $A=[A_{jk}]$ whose entries $A_{jk}$ are complex numbers or 
operators, we define $A^{\dagger}=[A_{kj}^{\ast}]$, $A^{\top}=[A_{kj}]$, 
 where the superscript ${}^{\ast}$ denotes either 
the complex conjugate of a complex number or the adjoint of an operator. 
$\mathbb{R}^{m \times n}$ denotes the set of real $m \times n$ matrices. 
$\mathbb{C}^{m \times n}$ denotes the set of complex $m \times n$ matrices. 
$\mathbb{Z}_{+}$ denotes the set of positive integers. For a real number $x>0$, $\lfloor x \rfloor$ denotes the largest integer not greater than $x$. 
For a real symmetric matrix $A=A^{\top}$, $A> 0$  means that $A$ is positive definite.  $\diag[A_{1},\cdots,A_{n}]$ denotes a block diagonal matrix with diagonal blocks $A_{j}$, $j=1,2,\cdots,n$.

\section{Preliminaries}\label{PRELIMINARIES}
Consider a linear quantum system of $N$ modes. Each mode is characterized by a pair of quadrature operators $\{\hat{q}_{j},\; \hat{p}_{j}\}$, $j=1,2,\cdots,N$, which satisfy the following commutation relations (we use the units $\hbar=1$)
\begin{align*}
\left[\hat{q}_{j}, \hat{p}_{k}\right]=i\delta_{jk}, \quad j,k=1,2,\cdots,N.
\end{align*}
Here $\delta_{jk}$ is the Kronecker delta. If we collect all the quadrature operators of the system into an operator-valued vector $\hat{x}\triangleq\left[\hat{q}_{1}\;\cdots\; \hat{q}_{N}\;\hat{p}_{1}\;\cdots\;\hat{p}_{N}\right]^{\top}$, then the commutation relations above can be rewritten as
\begin{align*}
\hat{x}\hat{x}^{\top}-\left(\hat{x}\hat{x}^{\top}\right)^{\top}=i\Sigma_{N},\quad \Sigma_{N}\triangleq\begin{bmatrix}
0 & I_{N}\\
-I_{N} &0
\end{bmatrix}.
\end{align*}
Suppose that the system Hamiltonian $\hat{H}$ is quadratic in  $\hat{x}$, i.e., 
$\hat{H}=\frac{1}{2}\hat{x}^{\top}G\hat{x}$,
with $G=G^{\top}\in \mathbb{R}^{2N\times 2N}$, and the coupling vector $\hat{L}$ is  linear  in $\hat{x}$, i.e., 
$\hat{L}=C\hat{x}$, with $C\in \mathbb{C}^{K\times 2N}$, then the time evolution of the quantum system can be described by the following quantum stochastic differential equations (QSDEs)
\begin{equation} \label{QSDE1}
\left\{\begin{aligned}
d\hat{x}(t)&=\mathcal{A}\hat{x}(t)dt+\mathcal{B} \begin{bmatrix}
d\hat{A}^{\top}(t) &d\hat{A}^{\dagger}(t)
\end{bmatrix}^{\top},\\
d\hat{Y}(t)&=\mathcal{C}\hat{x}(t)dt+d\hat{A}(t),
 \end{aligned}\right.
\end{equation}
where $\mathcal{A}=\Sigma_{N}(G+\im(C^{\dagger}C))$, 
$\mathcal{B}=i\Sigma_{N}[- C^{\dagger}\; \; C^{\top}]$, and $\mathcal{C}=C$~
\cite{Y12:ptrsa,N10:tac,JNP08:tac,NJP09:auto,NJD09:siamjco}. The input $\hat{A}(t)$ consists of $K$ independent quantum stochastic processes, i.e., $\hat{A}(t) =\left[\hat{A}_{1}(t)\; \hat{A}_{2}(t)\; \cdots \; \hat{A}_{K}(t)\right]^{\top}$ 
with $\hat{A}_{j}(t)$, $j=1,2,\cdots,K$, satisfying the following quantum It\=o rules: 
\begin{equation} \label{ito}
\left\{\begin{aligned} 
d\hat{A}_{j}(t)d\hat{A}_{k}^{\ast}(t)&=\delta_{jk}dt,\; d\hat{A}_{j}(t)d\hat{A}_{k}(t)=0,\\
d\hat{A}_{j}^{\ast}(t)d\hat{A}_{k}^{\ast}(t)&=d\hat{A}_{j}^{\ast}(t)d\hat{A}_{k}(t)=0.
 \end{aligned}\right. 
\end{equation}
The output $\hat{Y}(t)=\left[\hat{Y}_{1}(t)\;\; \hat{Y}_{2}(t)\;\;\cdots \;\; \hat{Y}_{K}(t)\right]^{\top}$ 
satisfies quantum It\=o rules similar to~\eqref{ito}~\cite{N10:tac,NJD09:siamjco,JNP08:tac,GJ09:tac, NJP09:auto,Y12:ptrsa,TN15:scl}. 
The quantum expectation value of the vector $\hat{x}$ is denoted by $
\langle \hat{x} \rangle$, 
and the covariance matrix is denoted by 
$V=\frac{1}{2}\langle \triangle\hat{x}{\triangle\hat{x}}^{\top}+(\triangle\hat{x}{\triangle\hat{x}}^{\top})^{\top} \rangle$, where $ \triangle\hat{x}=\hat{x}-\langle \hat{x} \rangle$~\cite{KY12:pra,OHY11:cdc,Y12:ptrsa,IY13:pra, MWPY14:msc,MFL11:pra,TN15:scl}. 
 By using the quantum It\=o rule, it can be shown that the mean value $\langle \hat{x} \rangle$ and the covariance matrix $V$ obey the following dynamical equations
\begin{equation} \label{QSDEequation}
\left\{\begin{aligned}
\frac{d\langle\hat{x}(t)\rangle}{dt}&=\mathcal{A}\langle\hat{x}(t)\rangle,  \\
\frac{dV(t)}{dt}&=\mathcal{A}V(t)+V(t)\mathcal{A}^{\top}+\frac{1}{2}\mathcal{B}\mathcal{B}^{\dagger}.   
\end{aligned}\right.
\end{equation}
 A Gaussian state is completely specified by $\langle \hat{x} \rangle$ and $V$.   As the mean value $\langle \hat{x}\rangle$ contains no information about noise and entanglement, we will restrict our attention to zero-mean Gaussian states. The purity of a Gaussian state is given by $p\triangleq\frac{1}{2^{N}\sqrt{\det(V)}}$. A Gaussian state is pure if and only if $\det(V)=2^{-2N}$~\cite{KY12:pra}. In fact, if a Gaussian state is pure, its covariance matrix $V$ can always be factored as 
\begin{align}\label{covariance}
V=\frac{1}{2}\begin{bmatrix}
Y^{-1} &Y^{-1}X\\
XY^{-1} &XY^{-1}X+Y
\end{bmatrix},
\end{align}
where $X=X^{\top}\in \mathbb{R}^{N\times N}$ and $Y=Y^{\top}\in \mathbb{R}^{N\times N}$ and $Y>0$~\cite{MFL11:pra,KY12:pra,Y12:ptrsa}.   As can be seen from~\eqref{covariance}, a  pure Gaussian state (with zero mean) is uniquely specified by a complex, symmetric matrix $Z\triangleq X+iY$, where $X=X^{\top}\in \mathbb{R}^{N\times N}$, $Y=Y^{\top}\in \mathbb{R}^{N\times N}$ and $Y>0$. In other words, given a zero-mean  pure Gaussian state, the matrix $Z=X+iY$ can be uniquely determined, and vice versa. In the following discussions, we will refer to $Z$ as the Gaussian graph matrix for  a pure Gaussian state~\cite{MFL11:pra}. If the system~\eqref{QSDE1} is initially in a Gaussian state, then the system will forever remain in a Gaussian state, with the first two moments $\langle \hat{x} \rangle$ and $V$ evolving as described in~\eqref{QSDEequation}. In order to generate a pure Gaussian state, $\mathcal{A}$ must be  Hurwitz, i.e., every eigenvalue of $\mathcal{A}$ has a negative real part. Then using~\eqref{QSDEequation}, we have $\langle\hat{x}(\infty)\rangle =0$ and $V(\infty)$ satisfies the following Lyapunov equation 
\begin{align}
\mathcal{A}V(\infty)+V(\infty)\mathcal{A}^{\top}+\frac{1}{2}\mathcal{B}\mathcal{B}^{\dagger}=0. \label{lya}
\end{align}
If the value of $V(\infty)$ obtained from~\eqref{lya} is identical to~\eqref{covariance}, then we can conclude that the desired pure Gaussian state is  generated by the system as its steady state. From the discussions above, the pure Gaussian state generation problem is indeed a covariance assignment problem, where the goal is to construct a system Hamiltonian $\hat{H}$ and a coupling vector $\hat{L}$, such that the system described by~\eqref{QSDE1} is strictly stable and  the covariance matrix $V$ corresponding to the desired pure Gaussian  state is the unique solution of the Lyapunov equation~\eqref{lya}. Recently, a necessary and sufficient condition has been developed in~\cite{KY12:pra,Y12:ptrsa} for solving this problem. The result is summarized as follows. 
\begin{lem}[\cite{KY12:pra,Y12:ptrsa}]\label{lem1}
Let $V$ be the covariance matrix corresponding to an $N$-mode pure Gaussian state.  Assume that it is expressed in the factored form~\eqref{covariance}. Then this pure Gaussian state is  generated by the linear quantum system~\eqref{QSDE1} if and only if
\begin{align} \label{G}
G=\begin{bmatrix}
XRX+YRY-\Gamma Y^{-1}X-XY^{-1}\Gamma^{\top} &-XR+\Gamma Y^{-1}\\
-RX+Y^{-1}\Gamma^{\top} &R
\end{bmatrix},
\end{align}
 and 
\begin{align} \label{C}
C=P^{\top}\left[-(X+iY)\;\; I_{N}\right], 
\end{align} 
where  $R=R^{\top}\in\mathbb{R}^{ N\times N}$,  $\Gamma=-\Gamma^{\top}\in\mathbb{R}^{ N\times N}$, and  $P\in \mathbb{C}^{N\times K}$ are free matrices satisfying the following rank condition
\begin{align}
\rank\left([P\;\;\;QP\;\;\;\cdots\;\;\;Q^{N-1}P]\right)=N,\;\;\;  Q\triangleq -iRY+Y^{-1}\Gamma. \label{rankconstraint}
\end{align}
\end{lem}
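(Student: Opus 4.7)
My approach is to substitute the factored form \eqref{covariance} of $V$, together with block-partitioned forms of $G$ and $C$, directly into the Lyapunov equation \eqref{lya}, and then read off the conditions on the blocks. Write $G=\begin{bmatrix}G_{11} & G_{12}\\ G_{12}^{\top} & G_{22}\end{bmatrix}$ and $C=[C_{1}\;\; C_{2}]$ with $N\times N$ blocks. Expanding $\mathcal{A}=\Sigma_{N}(G+\im(C^{\dagger}C))$ and $\tfrac{1}{2}\mathcal{B}\mathcal{B}^{\dagger}$ yields an explicit $2\times 2$ block matrix equation that splits into four coupled $N\times N$ matrix equations in the unknowns $G_{ij}$ and $C_{j}$.

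The key simplification is to apply a symplectic/graph change of coordinates built from $X$ and $Y$, namely something like $T=\begin{bmatrix}I & 0\\ X & Y\end{bmatrix}$ (up to a factor of $Y^{-1/2}$), which conjugates the pure-state covariance in \eqref{covariance} to the vacuum covariance $\tfrac{1}{2}I_{2N}$. In the new coordinates the Lyapunov equation reduces to the vacuum Lyapunov equation, whose solutions are classical and transparent: the transformed coupling vector must be of \emph{annihilation type}, i.e., of the form $P^{\top}[0\;\; I_{N}]$ in the complex basis, while the transformed Hamiltonian is parameterised by an arbitrary symmetric part and an arbitrary antisymmetric part that do not disturb the vacuum. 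Undoing the transformation, the annihilation-type constraint becomes precisely $C=P^{\top}[-(X+iY)\;\; I_{N}]$ in the original coordinates, and the two free Hamiltonian pieces become the symmetric matrix $R$ and the antisymmetric matrix $\Gamma$ that generate the parametrization \eqref{G}. I would then verify by direct block multiplication that the $G$ and $C$ so constructed do satisfy \eqref{lya} with $V$ as in \eqref{covariance}, and that every solution has this form.

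Finally I would translate the requirement that $\mathcal{A}$ be Hurwitz into the rank condition \eqref{rankconstraint}. Under the same symplectic transformation the $2N\times 2N$ real closed-loop generator $\mathcal{A}$ decouples into an $N$-dimensional complex subsystem whose dynamics is governed by $Q=-iRY+Y^{-1}\Gamma$, driven through $P$; $\mathcal{A}$ is Hurwitz iff this reduced complex system has no uncontrollable modes on the imaginary axis, which by the PBH test is equivalent to controllability of $(Q,P)$, i.e., to \eqref{rankconstraint}. The main obstacle I anticipate is this last step: carefully tracking the complexification of the symplectic structure so that real Hurwitz stability of $\mathcal{A}$ is genuinely equivalent to the complex Kalman rank condition on $[P\;\;QP\;\;\cdots\;\;Q^{N-1}P]$, and ruling out stray imaginary-axis modes arising from the skew-Hermitian part of $Q$.
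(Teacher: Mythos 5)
There is nothing in the paper to compare your attempt against: Lemma~\ref{lem1} is imported verbatim from \cite{KY12:pra,Y12:ptrsa} and stated without proof, so any proof you give is necessarily ``a different route from the paper.'' Judged on its own terms, your plan is a faithful reconstruction of the strategy actually used in those references. The symplectic map you propose, $S=\left[\begin{smallmatrix}Y^{-1/2} & 0\\ XY^{-1/2} & Y^{1/2}\end{smallmatrix}\right]$, does satisfy $S\Sigma_{N}S^{\top}=\Sigma_{N}$ (using $X=X^{\top}$) and $\tfrac{1}{2}SS^{\top}=V$, so it legitimately reduces the problem to the vacuum; in vacuum coordinates the admissible couplings are exactly the annihilation-type ones $P^{\top}[-iI_{N}\;\;I_{N}]$ and the admissible Hamiltonian matrices are exactly the passive ones $\left[\begin{smallmatrix}R & \Gamma\\ -\Gamma & R\end{smallmatrix}\right]$, which pull back to \eqref{C} and \eqref{G}. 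Two points deserve more care than your sketch gives them. First, the necessity direction: you must show that purity of the steady state \emph{forces} the dark-state structure (that the transformed Lyapunov equation $\mathcal{A}'+\mathcal{A}'^{\top}+\mathcal{B}'\mathcal{B}'^{\dagger}=0$ with the commutation constraint kills every creation-operator component of $L$ and every squeezing term of $H$), not merely that the stated $G,C$ satisfy \eqref{lya} by substitution. Second, the step you flag as the anticipated obstacle is indeed where the argument closes, and it closes cleanly: $iQ=RY+iY^{-1}\Gamma$ is similar via $Y^{1/2}$ to the Hermitian matrix $Y^{1/2}RY^{1/2}+iY^{-1/2}\Gamma Y^{-1/2}$, so the spectrum of $Q$ lies entirely on the imaginary axis, and the PBH criterion ``no uncontrollable imaginary-axis modes'' therefore coincides with full controllability of $(Q,P)$, i.e.\ with \eqref{rankconstraint}. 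With those two items written out, your outline is a complete and correct proof.
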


\begin{rmk}
From Lemma~\ref{lem1}, we could conjecture that for most pure Gaussian states,  the corresponding linear quantum systems that generate the pure Gaussian states might not be easy to realize in practice. Either the system Hamiltonian $\hat{H}=\frac{1}{2}\hat{x}^{\top}G\hat{x}$ or the coupling vector $\hat{L}=C\hat{x}$ might have a complex structure. We present an example to illustrate this fact. 
\end{rmk}

\begin{exam}
{\it Gaussian cluster states} are an important class of pure Gaussian states~\cite{BP03:book,MFL11:pra}. 
The covariance matrix  $V$ corresponding to an $N$-mode Gaussian cluster state is given by
$
V=\frac{1}{2}\begin{bmatrix}
e^{2\alpha}I_{N} &e^{2\alpha}B\\
e^{2\alpha}B &e^{-2\alpha}I_{N}+e^{2\alpha}B^{2}
\end{bmatrix}$, 
where $B=B^{\top}\in\mathbb{R}^{ N\times N}$ and $\alpha\in\mathbb{R}$ is 
the squeezing parameter.   
Applying the factorization~\eqref{covariance} yields $X=B$ and $Y=e^{-2\alpha}I_{N}$. Let us consider a simple case where 
\begin{align}
\label{first cluster example}
X=\begin{bmatrix}
0 &1 &0 \\
1 &0 &1  \\
0 &1 &0 
\end{bmatrix}, \quad Y=e^{-2\alpha}I_{3}. 
\end{align}
Using Lemma~\ref{lem1}, we can construct a linear quantum system that  generates the Gaussian cluster state~\eqref{first cluster example}. For example, let us choose $R=I_{3}$, $\Gamma=e^{-2\alpha}\begin{bmatrix}
0 &1 &0\\
-1 &0 &1\\
0 &-1 &0
\end{bmatrix}$ and $P=\begin{bmatrix}
1 &0 \\
0 &0 \\
0 &1 
\end{bmatrix}$. Then, by direct substitution, we can verify that the rank condition~\eqref{rankconstraint} holds. In this case, using Lemma~\ref{lem1}, the system Hamiltonian  is given by
\begin{align*}
\hat{H}&=\frac{1}{2}(-1+e^{-4\alpha})\hat{q}_{1}^{2}+\frac{1}{2}(2+e^{-4\alpha})\hat{q}_{2}^{2}+\frac{1}{2}(3+e^{-4\alpha})\hat{q}_{3}^{2}\\&\;+\frac{1}{2}\hat{p}_{1}^{2}+\frac{1}{2}\hat{p}_{2}^{2}+\frac{1}{2}\hat{p}_{3}^{2}+\hat{q}_{1}\hat{q}_{3}-2\hat{q}_{2}\hat{p}_{1}-2 \hat{q}_{3}\hat{p}_{2},
\end{align*}
and the coupling vector is given by $\hat{L}=\begin{bmatrix}
-e^{-2\alpha}i\hat{q}_{1}-\hat{q}_{2}+\hat{p}_{1}\\
-e^{-2\alpha}i\hat{q}_{3}-\hat{q}_{2}+\hat{p}_{3}
\end{bmatrix}$. We see that, in the Hamiltonian $\hat{H}$, each mode is coupled to the other two modes. Moreover,  
 $\hat{L}$ contains two dissipative processes. These features indicate that the implementation of this system could be rather challenging in practice. 
\end{exam}

\section{Constraints} \label{Constraints}
In this section, we put some restrictions on a  linear quantum system to ensure that the resulting system is relatively easy to implement in practice. We assume that the system~\eqref{QSDE1} is subject to the following two constraints:
\begin{enumerate}
\item The Hamiltonian $\hat{H}$ is of the form 
$\hat{H}=\sum_{j=1}^{N}  \frac{\omega_{j}}{2}(\hat{q}_{j}^{2}+\hat{p}_{j}^{2})$, where each $\omega_{j}$ can be an arbitrary real number. \label{constraint1}
\item The system is coupled to a single dissipative environment, i.e., $K=1$. \label{constraint2}
\end{enumerate}

The system Hamiltonian  can be written as $\hat{H}=\sum\limits_{j=1}^{N}  \frac{\omega_{j}}{2}(\hat{q}_{j}^{2}+\hat{p}_{j}^{2})=\sum\limits_{j=1}^{N}  \frac{\omega_{j}}{2}(\hat{a}_{j}\hat{a}_{j}^{\ast}+\hat{a}_{j}^{\ast}\hat{a}_{j})=\sum\limits_{j=1}^{N}  \frac{\omega_{j}}{2}(2\hat{a}_{j}^{\ast}\hat{a}_{j}+1)\cong\sum\limits_{j=1}^{N}\omega_{j} \hat{a}_{j}^{\ast}\hat{a}_{j}$, where $\hat{a}_{j}=(\hat{q}_{j}+i\hat{p}_{j})/\sqrt{2}$ and $\hat{a}_{j}^{\ast}=(\hat{q}_{j}-i\hat{p}_{j})/\sqrt{2}$ denote the annihilation operator and creation operator for the $j$th mode, respectively.  Note that here we have used the fact that a constant term in the Hamiltonian does not affect the dynamics of a quantum system, and hence can be dropped. The Hamiltonian describes a collection of $N$ independent quantum harmonic oscillators. As can be seen in~\ref{constraint1}, the system Hamiltonian $\hat{H}$ is a sum of $N$ independent harmonic oscillator terms and no couplings exist between these oscillators. 

The second constraint~\ref{constraint2} requires us to use only one dissipative process to generate a pure Gaussian state.  Note that when $K = 0$ (that is, no dissipation is introduced), the system~\eqref{QSDE1} reduces to an isolated quantum system. Any isolated quantum system cannot be strictly
stable and hence cannot evolve into a
pure Gaussian steady state. In order for a quantum system subject to~\ref{constraint1} and~\ref{constraint2} to be strictly stable, the single reservoir must act globally on all system modes, since otherwise the system contains an isolated subsystem which cannot be strictly stable.

\section{Characterization} \label{Characterization}
In this section, we characterize the class of pure Gaussian states that can be  generated  by  linear open quantum systems subject to the two constraints~\ref{constraint1} and~\ref{constraint2}. First, we introduce some  technical results that will be used to derive the main results.

\begin{lem}\label{diagonalizable}
Let $A\in\mathbb{C}^{n \times n}$, and assume that $A^{2}=\epsilon I_{n}$, $\epsilon > 0$. Then $A$ is diagonalizable over $\mathbb{C}$ and the eigenvalues of  $A$ are in the set $\{\pm\sqrt{\epsilon}\}$.
\end{lem}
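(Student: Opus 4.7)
The plan is to exploit the fact that the hypothesis $A^{2}=\epsilon I_{n}$ exhibits an annihilating polynomial of $A$ with distinct roots, from which diagonalizability follows by the standard minimal-polynomial criterion. Concretely, I would first observe that the polynomial $p(x)=x^{2}-\epsilon=(x-\sqrt{\epsilon})(x+\sqrt{\epsilon})$ satisfies $p(A)=0$. Since $\epsilon>0$, the two roots $\pm\sqrt{\epsilon}$ are real, distinct, and nonzero.

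Next I would invoke the classical fact that the minimal polynomial $m_{A}(x)$ of $A$ divides every annihilating polynomial, so $m_{A}(x)$ divides $p(x)$. Because $p(x)$ has no repeated roots, $m_{A}(x)$ is a product of distinct linear factors, which is equivalent to $A$ being diagonalizable over $\mathbb{C}$. Since every eigenvalue of $A$ is a root of $m_{A}(x)$ and hence of $p(x)$, the eigenvalues lie in $\{+\sqrt{\epsilon},-\sqrt{\epsilon}\}$.

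As an explicit alternative (which also yields the spectral projections for later use), I would introduce
\begin{align*}
P_{+}=\tfrac{1}{2}\bigl(I_{n}+\tfrac{1}{\sqrt{\epsilon}}A\bigr),\qquad P_{-}=\tfrac{1}{2}\bigl(I_{n}-\tfrac{1}{\sqrt{\epsilon}}A\bigr),
\end{align*}
and verify directly using $A^{2}=\epsilon I_{n}$ that $P_{\pm}^{2}=P_{\pm}$, $P_{+}P_{-}=P_{-}P_{+}=0$, and $P_{+}+P_{-}=I_{n}$. Then $A=\sqrt{\epsilon}\,P_{+}-\sqrt{\epsilon}\,P_{-}$ gives a decomposition of $\mathbb{C}^{n}$ into the eigenspaces $\im(P_{+})$ and $\im(P_{-})$ with eigenvalues $\pm\sqrt{\epsilon}$, proving both claims simultaneously.

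There is no real obstacle here; the statement is essentially an immediate consequence of elementary spectral theory. The only mild point to be careful about is ensuring $\sqrt{\epsilon}$ is well defined as a positive real, which is guaranteed by the assumption $\epsilon>0$; had $\epsilon$ been allowed to be zero or negative, the argument would still go through over $\mathbb{C}$ but the eigenvalue set and positivity statement would need restating.
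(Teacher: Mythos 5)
Your proposal is correct and follows essentially the same route as the paper: the polynomial $x^{2}-\epsilon$ annihilates $A$, the minimal polynomial divides it and so splits into distinct linear factors, giving diagonalizability, with the eigenvalues confined to $\{\pm\sqrt{\epsilon}\}$ (the paper checks this last point by a direct eigenvector computation rather than via the minimal polynomial, but this is immaterial). Your explicit spectral projections $P_{\pm}$ are a pleasant, self-contained alternative not present in the paper, though they are not needed for the result.
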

\begin{proof}
First we show that the eigenvalues of $A$ are either $\sqrt{\epsilon}$ or $-\sqrt{\epsilon}$. Suppose that $\lambda$ is an eigenvalue of $A$ and that $A\xi = \lambda \xi$, $\xi \in 
\mathbb{C}^{n }$, $\xi \ne 0$. Then we have $A^{2}\xi = \lambda A \xi=\lambda^{2} \xi =\epsilon \xi$. So $\left(\lambda^{2}-\epsilon\right) \xi = 0$. Since $\xi \ne 0$, it follows that $\lambda=\pm\sqrt{\epsilon}$.

Next we show that $A$ is diagonalizable. Since $A^{2}=\epsilon I_{n}$, we have $A^{2}-\epsilon I_{n}=0 $ and hence $p(s)=(s+\sqrt{\epsilon})(s-\sqrt{\epsilon})$ is a monic polynomial of degree $2$ that annihilates $A$. The minimal polynomial $q_{A}(s)$ of $A$ divides $p(s)$, and hence $q_{A}(s)=s+\sqrt{\epsilon},\;s-\sqrt{\epsilon},\;\text{or}\;(s+\sqrt{\epsilon})(s-\sqrt{\epsilon})$. For all cases, $q_{A}(s)$ is a product of linear factors, with no repetitions. As a result, the Jordan canonical form of $A$ consists only of Jordan blocks of size one~\cite[Theorem 3.3.6]{HJ12:book}. That is, $A$ is diagonalizable.  
\end{proof}

\begin{rmk}
An $n \times n$ matrix $A$ is said to be involutory if $A^{2}= I_{n}$~\cite[Definition 0.9.13]{HJ12:book}. 
\end{rmk}

\begin{lem} \label{identity}
Let $A=A_{1}+iA_{2}$ with $A_{1}\in \mathbb{R}^{n\times n}$, $ A_{2} \in \mathbb{R}^{n\times n}$ and $A_{2}=A_{2}^{\top}>0$. Suppose  $A^{2}=-I_{n}$. Then $A=iI_{n}$. 
\end{lem}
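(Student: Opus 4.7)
The plan is to split the hypothesis $A^2 = -I_n$ into its real and imaginary components and then extract two matrix equations that together force $A_1 = 0$ and $A_2 = I_n$. Writing $A = A_1 + iA_2$ with $A_1, A_2 \in \mathbb{R}^{n\times n}$ and expanding gives
\begin{align*}
A^2 = (A_1^2 - A_2^2) + i(A_1 A_2 + A_2 A_1) = -I_n,
\end{align*}
so comparing real and imaginary parts yields the two simultaneous conditions $A_1^2 - A_2^2 = -I_n$ and $A_1 A_2 + A_2 A_1 = 0$.

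Next I would attack the anticommutation relation $A_1 A_2 + A_2 A_1 = 0$ using the assumption $A_2 = A_2^\top > 0$. The cleanest route is to view this as a Sylvester-type equation $A_2 A_1 + A_1 A_2 = 0$ with coefficients $A_2$ and $A_2$: the operator $X \mapsto A_2 X + X A_2$ on $\mathbb{R}^{n\times n}$ is invertible precisely when $A_2$ and $-A_2$ share no eigenvalue, and this holds here because $\mathrm{spec}(A_2) \subset (0,\infty)$ while $\mathrm{spec}(-A_2) \subset (-\infty,0)$. Hence $A_1 = 0$. Alternatively (and perhaps more transparently to the reader), I can diagonalize $A_2 = U D U^\top$ with $U$ orthogonal and $D$ positive diagonal, set $\tilde A_1 = U^\top A_1 U$, rewrite the anticommutator as $\tilde A_1 D + D \tilde A_1 = 0$, and read off entrywise that $(\tilde A_1)_{jk}(D_{jj}+D_{kk}) = 0$; since $D_{jj}+D_{kk} > 0$, every entry of $\tilde A_1$ vanishes and so $A_1 = 0$.

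Once $A_1 = 0$ is established, the real-part equation collapses to $A_2^2 = I_n$. Because $A_2$ is real symmetric and strictly positive definite, all of its eigenvalues are positive; combined with $A_2^2 = I_n$ (whose eigenvalues must be $\pm 1$), every eigenvalue of $A_2$ equals $+1$, and symmetry then gives $A_2 = I_n$. Therefore $A = 0 + iI_n = iI_n$, as claimed.

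No step here looks particularly hard; the only place one might slip is in justifying $A_1 = 0$ from the anticommutation, so I would make sure to spell out the positive-definiteness argument (either via the Sylvester viewpoint or via the entrywise diagonalization) rather than leave it implicit.
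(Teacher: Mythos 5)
Your proposal is correct and follows essentially the same route as the paper: split $A^{2}=-I_{n}$ into the real and imaginary equations $A_{1}^{2}-A_{2}^{2}=-I_{n}$ and $A_{1}A_{2}+A_{2}A_{1}=0$, kill $A_{1}$ via the Sylvester-equation uniqueness argument (the paper cites Sylvester's theorem for exactly this step), and then conclude $A_{2}=I_{n}$ from $A_{2}^{2}=I_{n}$ together with positive definiteness. Your explicit justification of the Sylvester step and the entrywise diagonalization alternative are just more detailed versions of what the paper leaves to a citation.
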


\begin{proof}
Substituting $A=A_{1}+iA_{2}$ into $A^{2}=-I_{n}$, we obtain $
A_{1}^{2}-A_{2}^{2}=-I_{n}$ and $A_{1}A_{2}+A_{2}A_{1}=0$.  Since $A_{2}>0$, it follows from Sylvester's theorem that $A_{1}=0$~\cite[Theorem 2.4.4.1]{HJ12:book}. Hence $A_{2}^{2}=I_{n}$. According to Lemma~\ref{diagonalizable}, $A_{2}$ is diagonalizable and its eigenvalues are either $1$ or $-1$. But $A_{2}>0$, so $A_{2}$ can only be an identity matrix, i.e., $A_{2}=I_{n}$. Therefore $A=iI_{n}$.
\end{proof}

\begin{lem} \label{twodimensional}
Let $A=A_{1}+iA_{2}$ with $A_{1}\in \mathbb{R}^{2\times 2}$, $A_{2} \in \mathbb{R}^{2\times 2}$ and $A_{2}=A_{2}^{\top}>0$.  Suppose  $ \left(\diag[1,-1]A\right)^{2}=-I_{2}$. Then the two eigenvalues of the matrix $ \diag[1,-1]A$ are $\lambda_{1}=i$ and $\lambda_{2}=-i$.
\end{lem}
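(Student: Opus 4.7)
}
Let $D=\diag[1,-1]$ and set $B\triangleq DA=DA_{1}+iDA_{2}$, so the hypothesis reads $B^{2}=-I_{2}$. My first step is simply to repeat the argument of Lemma \ref{diagonalizable} with the sign of $\epsilon$ flipped: any eigenvector $\xi$ of $B$ with eigenvalue $\lambda$ satisfies $\lambda^{2}\xi=B^{2}\xi=-\xi$, forcing $\lambda\in\{i,-i\}$, and the polynomial $p(s)=(s-i)(s+i)$ annihilates $B$ and has no repeated roots, so $B$ is diagonalizable over $\mathbb{C}$.

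The second step exploits the fact that $B$ is only $2\times 2$. A diagonalizable $2\times 2$ matrix with eigenvalues drawn from $\{i,-i\}$ is similar to exactly one of $iI_{2}$, $-iI_{2}$, or $\diag[i,-i]$; in the first two cases, being similar to a scalar matrix forces $B=iI_{2}$ or $B=-iI_{2}$ outright. So it suffices to rule out $B=\pm iI_{2}$: the remaining possibility has eigenvalues $i$ and $-i$, which is the conclusion.

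The third step is where the positive definiteness of $A_{2}$ enters. If $B=iI_{2}$, then since $D^{-1}=D$ we get $A=DB=iD=i\diag[1,-1]$, whence $A_{1}=0$ and $A_{2}=\diag[1,-1]$, which is indefinite and contradicts $A_{2}>0$. The case $B=-iI_{2}$ produces $A_{2}=\diag[-1,1]$, which is likewise indefinite. Both degenerate cases are therefore excluded, and the eigenvalues of $DA$ must be $i$ and $-i$, as claimed.

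I do not anticipate any real obstacle: the argument is a two-dimensional specialisation of Lemma \ref{diagonalizable} combined with the scalar-matrix rigidity trick, and the sign structure of $D$ together with $A_{2}>0$ cleanly kills the unwanted cases. If anything, the only point to state carefully is that a matrix similar to $\pm iI_{2}$ is literally equal to $\pm iI_{2}$, so that the contradiction with $A_{2}>0$ can actually be extracted.
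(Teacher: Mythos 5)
Your proposal is correct and follows essentially the same route as the paper: invoke the diagonalizability argument to place the eigenvalues of $\diag[1,-1]A$ in $\{i,-i\}$, observe that if both eigenvalues coincide the matrix must equal $\pm iI_{2}$, and then derive $A_{2}=\pm\diag[1,-1]$, contradicting $A_{2}>0$. The only cosmetic difference is that you re-run the annihilating-polynomial argument for $\epsilon=-1$ rather than citing Lemma~\ref{diagonalizable} directly (whose statement assumes $\epsilon>0$), which is if anything slightly more careful than the paper.
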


\begin{proof}
Since $\left(\diag[1,-1]A\right)^{2}=-I_{2}$, by Lemma~\ref{diagonalizable}, the matrix $\diag[1,-1]A$ is diagonalizable and its eigenvalues are either $i$ or $-i$. If all the eigenvalues of $\diag[1,-1]A$ are $i$, then we have $\diag[1,-1]A=iI_{2}$, and hence $A=i\diag[1,-1]$. In this case, we have $A_{2}=\diag[1,-1]$, which contradicts the assumption that $A_{2}>0$. Similarly, if all the eigenvalues of $\diag[1,-1]A$ are $-i$, we will also obtain a contradiction. Therefore, the two eigenvalues of $ \diag[1,-1]A$ are different; they are $\lambda_{1}=i$ and $\lambda_{2}=-i$.
\end{proof}

\begin{definition} \label{def}
A matrix $A\in\mathbb{C}^{n \times n}$ is said to be non-derogatory if for any eigenvalue $\lambda$ of $A$, $\rank(A - \lambda I_{n}) = n-1$. 
\end{definition}

\begin{lem} \label{similar}
Suppose $A_{1}\in\mathbb{C}^{n \times n}$, $A_{2}\in\mathbb{C}^{n \times n}$ and  $A_{1}$ is similar to $A_{2}$. Then $A_{2}$ is a  non-derogatory matrix if and only if $A_{1}$ is a  non-derogatory matrix.
\end{lem}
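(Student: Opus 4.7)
The plan is to reduce the statement to two elementary facts: similar matrices have the same spectrum, and multiplication by invertible matrices preserves rank. Since the claim is an if-and-only-if that is symmetric in $A_1$ and $A_2$ (similarity is an equivalence relation), it suffices to show one direction, say that if $A_1$ is non-derogatory then so is $A_2$.

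First I would fix an invertible $S \in \mathbb{C}^{n \times n}$ with $A_2 = S A_1 S^{-1}$, which exists by hypothesis. For any $\lambda \in \mathbb{C}$ I would write
\begin{align*}
A_2 - \lambda I_n = S A_1 S^{-1} - \lambda S S^{-1} = S(A_1 - \lambda I_n) S^{-1}.
\end{align*}
Because left- and right-multiplication by invertible matrices preserves rank, this gives $\rank(A_2 - \lambda I_n) = \rank(A_1 - \lambda I_n)$. In particular, $\det(A_2 - \lambda I_n) = 0$ iff $\det(A_1 - \lambda I_n) = 0$, so $A_1$ and $A_2$ share the same set of eigenvalues.

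Now assume $A_1$ is non-derogatory and let $\lambda$ be any eigenvalue of $A_2$. By the previous step $\lambda$ is also an eigenvalue of $A_1$, so by Definition~\ref{def} applied to $A_1$ we have $\rank(A_1 - \lambda I_n) = n-1$. The rank identity above then yields $\rank(A_2 - \lambda I_n) = n-1$, which is exactly the condition for $A_2$ to be non-derogatory. The reverse implication follows by swapping the roles of $A_1$ and $A_2$ (using $S^{-1}$ in place of $S$). There is essentially no obstacle in this argument; the entire proof rests on the single observation that $A_2 - \lambda I_n$ and $A_1 - \lambda I_n$ are themselves similar.
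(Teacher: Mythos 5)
Your proof is correct and follows essentially the same route as the paper's: both conjugate $A-\lambda I_n$ by the similarity transformation and use invariance of rank and of the spectrum under similarity, with the reverse direction handled by symmetry. No issues.
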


\begin{proof}
Assume that $A_{1}$ is a non-derogatory matrix and $A_{1}=FA_{2}F^{-1}$, where $F$ is a non-singular matrix. We now show that $A_{2}$ is also a  non-derogatory matrix. Suppose $\lambda$ is an eigenvalue of $A_{2}$. Then $\lambda $ is also an eigenvalue of $A_{1}$. Furthermore, $\rank(A_{2} - \lambda I_{n}) = \rank(F^{-1}A_{1}F - \lambda I_{n}) =\rank(A_{1} - \lambda I_{n}) = n-1$. Hence by definition, $A_{2}$ is a  non-derogatory matrix.
\end{proof}

The following lemma can be found in~\cite{H77:jrnbs}. To make this paper self-contained, we  include its proof. 
\begin{lem} \label{non-dero}
Let $A\in\mathbb{C}^{n \times n}$. Then $A$ is a non-derogatory matrix if and only if there exists a column vector $\xi\in \mathbb{C}^{n}$ such that the pair $(A,\; \xi)$ is controllable, i.e.,
\begin{align}
\rank\left([\xi\;\;\;A\xi\;\;\;\cdots\;\;\;A^{n-1}\xi]\right)=n.  \label{controllable}
\end{align}
\end{lem}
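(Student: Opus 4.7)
The plan is to use the classical equivalence: $A$ is non-derogatory iff its minimal polynomial coincides with its characteristic polynomial, which in turn is the natural bridge to cyclic vectors. Throughout, I will freely use the Jordan canonical form over $\mathbb{C}$.

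For the ``if'' direction, suppose there exists $\xi$ satisfying~\eqref{controllable}. I would consider the linear map $p \mapsto p(A)\xi$ defined on polynomials of degree strictly less than $n$. The columns $\xi, A\xi, \ldots, A^{n-1}\xi$ being linearly independent means exactly that no nonzero polynomial of degree less than $n$ annihilates $A$ on $\xi$. In particular the minimal polynomial $q_A(s)$, which satisfies $q_A(A)\xi = 0$, cannot have degree less than $n$. Since $q_A(s)$ always divides the characteristic polynomial (of degree $n$), we obtain $\deg q_A = n$, so $q_A$ equals the characteristic polynomial. A standard fact is that this is equivalent to each eigenvalue having geometric multiplicity one, i.e., $\rank(A-\lambda I_n) = n-1$ for every eigenvalue $\lambda$; hence $A$ is non-derogatory.

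For the ``only if'' direction, I would pass to the Jordan form. Write $A = FJF^{-1}$ with $J = \diag[J_1, \ldots, J_k]$, where $J_i$ is the single Jordan block corresponding to the distinct eigenvalue $\lambda_i$, of size $m_i$; uniqueness of the block per eigenvalue is precisely the non-derogatory hypothesis, and $\sum_i m_i = n$. Define $\xi_0 = [e_1^{(1)\top}, \ldots, e_1^{(k)\top}]^{\top}$, where $e_1^{(i)}$ is the first standard basis vector of $\mathbb{C}^{m_i}$. For any polynomial $p$ of degree less than $n$, I would compute that $p(J)\xi_0 = 0$ iff $p(J_i)e_1^{(i)} = 0$ for each $i$; since $p(J_i)$ is upper triangular with $p^{(j)}(\lambda_i)/j!$ on the $j$-th superdiagonal, this is equivalent to $p^{(j)}(\lambda_i) = 0$ for $0 \le j \le m_i - 1$, which in turn is equivalent to $(s - \lambda_i)^{m_i}$ dividing $p$ for each $i$. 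Because the $\lambda_i$ are distinct, this forces the characteristic polynomial $\prod_i (s - \lambda_i)^{m_i}$ (of degree $n$) to divide $p$, and since $\deg p < n$ we conclude $p = 0$. Thus $\xi_0, J\xi_0, \ldots, J^{n-1}\xi_0$ are linearly independent, and setting $\xi = F\xi_0$ yields the controllability condition~\eqref{controllable} for $(A,\xi)$ via the similarity $A^{j}\xi = F J^{j}\xi_0$.

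The main obstacle is the construction of a cyclic vector in the ``only if'' part and verifying the derivative condition that encodes linear independence of $\xi_0, J\xi_0, \ldots, J^{n-1}\xi_0$; the argument works only because each eigenvalue contributes a single Jordan block of size equal to its algebraic multiplicity, which is exactly the content of the non-derogatory assumption. The ``if'' direction, by contrast, is essentially immediate once the controllability condition is rephrased as a statement about the minimal polynomial.
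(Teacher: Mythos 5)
Your ``if'' direction is correct and takes a genuinely different (and arguably cleaner) route than the paper: you read the controllability condition as saying that no nonzero polynomial of degree less than $n$ kills $\xi$, deduce that the minimal polynomial has degree $n$ and hence equals the characteristic polynomial, and invoke the standard equivalence with geometric multiplicity one. The paper instead gets sufficiency from the PBH test, $\rank([A-\lambda I_{n}\;\;\xi])=n$, plus the observation that appending a single column can raise the rank by at most one; both arguments are sound, and yours makes the role of the minimal polynomial more transparent.

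The ``only if'' direction, however, contains a step that fails as written. With the upper-triangular Jordan convention you state (so that $p(J_i)$ carries $p^{(j)}(\lambda_i)/j!$ on the $j$-th \emph{super}diagonal), the vector $e_1^{(i)}$ is an \emph{eigenvector} of $J_i$: the first column of the upper-triangular matrix $p(J_i)$ is $p(\lambda_i)e_1^{(i)}$, so $p(J_i)e_1^{(i)}=0$ is equivalent to $p(\lambda_i)=0$ alone, not to the vanishing of all derivatives up to order $m_i-1$. Consequently your $\xi_0$ generates a cyclic subspace of dimension equal to the number of distinct eigenvalues, not $n$. Concretely, for the non-derogatory matrix $A=\begin{bmatrix}0&1\\0&0\end{bmatrix}$ your recipe gives $\xi_0=e_1$ and $[\xi_0\;\;A\xi_0]=[e_1\;\;0]$, which has rank $1$. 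The fix is immediate: take the \emph{last} basis vector $e_{m_i}^{(i)}$ of each block (whose image under $p(J_i)$ is the last column, carrying $p^{(j)}(\lambda_i)/j!$ for $0\le j\le m_i-1$), or equivalently switch to the lower-triangular Jordan convention; the rest of your divisibility argument then goes through. For comparison, the paper sidesteps Jordan form entirely in this direction by using that a non-derogatory $A$ is similar to the companion matrix of its degree-$n$ minimal polynomial and taking $\xi=F[1\;0\;\cdots\;0]^{\top}$, for which controllability is a one-line verification.
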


\begin{proof}
To establish the necessity, we notice that the minimal polynomial of $A$ has degree $n$ if $A$ is a non-derogatory matrix~\cite[Theorem 3.3.15]{HJ12:book}. Assume that the minimal polynomial of $A$ is $q_{A}(s)=s^{n}+a_{n-1}s^{n-1}+a_{n-2}s^{n-2}+\cdots+a_{1}s+a_{0}$, with $a_{j}\in \mathbb{C}$, $j=0,1,\cdots, n-1$. Then it can be shown that $A$ is similar to the companion matrix of $q_{A}(s)$, i.e., 
\begin{align*}
A=F\begin{bmatrix}
0 & 0 & \dots & 0 & -a_0 \\
1 & 0 & \dots & 0 & -a_1 \\
0 & 1 & \dots & 0 & -a_2 \\
\vdots & \vdots & \ddots & \vdots & \vdots \\
0 & 0 & \dots & 1 & -a_{n-1}
\end{bmatrix}F^{-1},
\end{align*}
where $F\in\mathbb{C}^{n \times n}$ is a non-singular matrix \cite[Theorem 3.3.15]{HJ12:book}. Let $\xi=F\left[1 \;\; 0 \;\; \cdots\;\; 0\right]^{\top}$. Then we can establish that the rank constraint~\eqref{controllable} holds by direct substitution.

To establish the sufficiency, we notice that the rank constraint~\eqref{controllable} implies that $\rank\left([A-\lambda I_{n}\;\; \xi]\right)=n$ for any eigenvalue $\lambda $ of $A$~\cite[Theorem 3.1]{ZJK96:book}. Since $\xi$ is a column vector, it follows that $\rank \left(A-\lambda I_{n}\right)\ge n-1$. But if $\lambda$ is an eigenvalue of $A$, we must have $\rank \left(A-\lambda I_{n}\right)\le n-1$. Hence $\rank(A - \lambda I_{n}) = n-1$ for any eigenvalue $\lambda$ of $A$; by definition, $A$ is a non-derogatory matrix.  
\end{proof}

\begin{rmk}
According to Lemma~\ref{lem1}, the second constraint~\ref{constraint2} is equivalent to requiring that the matrix $P$ in~\eqref{C} is a column vector. According to Lemma~\ref{non-dero}, the second constraint~\ref{constraint2} is further equivalent to requiring that the matrix $Q$ in~\eqref{rankconstraint} is a non-derogatory matrix. 
\end{rmk}

Now we are in a position to present the main result of this paper. The following theorem  characterizes the class of pure Gaussian states that can be  generated by linear quantum systems subject to the two constraints~\ref{constraint1} and~\ref{constraint2}. Before presenting it, we define three sets: 
\begin{align*}
\Lambda &\triangleq\{z\;\;\big|\;\; z\in \mathbb{C}\;\;\text{and}\;\; \im(z)>0\},\\
\Pi&\triangleq\Big\{\diag[z,\;i]\;\;\big|\;\; z\in \mathbb{C} \;\; \text{and}\;\;\im(z)>0 \Big\},\\
\Xi&\triangleq\bigg\{\mathpzc{Z}\;\;\Big|\;\; \mathpzc{Z}=\mathpzc{Z}^{\top}\in \mathbb{C}^{2\times 2}, \;\;\im\left(\mathpzc{Z}\right)>0,\\
&\quad\;\;\text{and}\;\; \bigg(\diag[1,-1]
\mathpzc{Z}\bigg)^{2} =-I_{2}\bigg\}.
\end{align*}

\begin{thm} \label{thm1}
An $N$-mode pure Gaussian state can be  generated by a  linear quantum system subject to the two constraints~\ref{constraint1} and~\ref{constraint2} if and only if its Gaussian graph matrix $Z$ can be written as 
\begin{align}
Z=\mathcal{P}^{\top}\tilde{Z}\mathcal{P},\quad \tilde{Z}=\diag[\tilde{Z}_{1},\cdots,\tilde{Z}_{\lfloor \frac{N+1}{2} \rfloor}], \label{thm1Z}
\end{align}
 where $\mathcal{P}\in\mathbb{R}^{N\times N}$ is a permutation matrix, $\tilde{Z}_{1}\in\left(\Lambda\cup\Pi\cup\Xi\right)$, and $\tilde{Z}_{j}\in \Xi$, $j=2,\cdots,\lfloor \frac{N+1}{2} \rfloor$.
\end{thm}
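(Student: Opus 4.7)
The plan is to extract intrinsic conditions on the Gaussian graph matrix $Z=X+iY$ from Lemma~\ref{lem1} under constraints~\ref{constraint1} and~\ref{constraint2}, and then analyze those conditions to derive the claimed block form. Demanding $G=\diag[\omega_{1},\ldots,\omega_{N},\omega_{1},\ldots,\omega_{N}]$ in~\eqref{G} and matching blocks yields $R=\diag[\omega_{1},\ldots,\omega_{N}]$, $\Gamma=XRY$ (consistent with skew-symmetry via $XRY+YRX=0$), and $YRY-XRX=R$; these combine into the single complex identity
\begin{equation*}
ZRZ = (XRX-YRY) + i(XRY+YRX) = -R.
\end{equation*}
Under~\ref{constraint2}, $P$ is a column vector and~\eqref{rankconstraint} becomes controllability of $(Q,P)$ with $Q = -iRY+Y^{-1}XRY = Y^{-1}Z^{\ast}RY$. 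By Lemmas~\ref{similar} and~\ref{non-dero}, this is equivalent to requiring $Z^{\ast}R$ to be non-derogatory, with a suitable column $P$ then available.

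For the necessity direction, the task is to show that $ZRZ=-R$ together with $Z^{\ast}R$ non-derogatory forces the stated form. First split the indices into $I_{0}=\{i:\omega_{i}=0\}$ and $I_{+}=\{i:\omega_{i}\ne 0\}$. Writing $ZRZ=-R$ blockwise, invertibility of $R_{+}$ forces the mixed off-diagonal block of $Z$ to vanish, giving $Z=\diag[Z_{0},Z_{+}]$ with $Z_{+}R_{+}Z_{+}=-R_{+}$; non-derogatoriness of $Z^{\ast}R=\diag[0,Z_{+}^{\ast}R_{+}]$ then forces $|I_{0}|\leq 1$, since a $k\times k$ zero matrix is non-derogatory only for $k\leq 1$. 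Within $I_{+}$, set $T=Z_{+}R_{+}$, so $T^{2}=-R_{+}^{2}$; because $R_{+}^{2}$ is diagonal with eigenspaces partitioning $I_{+}$ by the magnitude $|\omega_{i}|$ and $T$ commutes with $T^{2}$, $T$ preserves these eigenspaces, and together with the diagonal structure of $R_{+}$ this forces $Z_{+}$ to be block-diagonal in the same magnitude partition. Inside each magnitude block, $T_{l}^{2}=-s_{l}^{2}I$, so Lemma~\ref{diagonalizable} yields diagonalizability with eigenvalues in $\{\pm is_{l}\}$, and non-derogatoriness caps each block's size at two. Same-sign size-$2$ blocks would give $Z_{l}^{2}=-I$, whereupon Lemma~\ref{identity} forces $Z_{l}=iI_{2}$ with $T_{l}=\pm is_{l}I_{2}$---a repeated eigenvalue, contradicting non-derogatoriness; hence size-$2$ blocks must have opposite signs, and up to permutation $Z_{l}R_{l}Z_{l}=-R_{l}$ reduces to $(\diag[1,-1]Z_{l})^{2}=-I_{2}$, i.e., $Z_{l}\in\Xi$. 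Size-$1$ blocks in $I_{+}$ satisfy $z^{2}=-1$ with $\im(z)>0$, forcing $z=i$. A final permutation pairs the leftover $1\times 1$ pieces (the possible $I_{0}$ mode and the vacuum modes) into $\Xi$ blocks of the form $iI_{2}$ or a single $\Pi$ block $\diag[z,i]$, leaving at most one unpaired $\Lambda$ block---present precisely when $N$ is odd.

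For sufficiency, given $Z$ in the claimed form, the construction proceeds block by block: choose $R=\diag[\mu,-\mu]$ inside an $\Xi$ block (with $\mu\ne 0$), $R=\diag[0,\mu]$ inside a $\Pi$ block, and $R=0$ or $R=\omega$ inside a $\Lambda$ block, with the magnitudes $|\mu_{l}|$ all distinct across blocks. Setting $\Gamma=XRY$ delivers the required $G$ via~\eqref{G}; verifying $ZRZ=-R$ block by block uses Lemma~\ref{twodimensional} for the $\Xi$ blocks, and Lemma~\ref{non-dero} then produces a column vector $P$ making $(Q,P)$ controllable, since the eigenvalues $\pm i|\mu_{l}|$ of $Z^{\ast}R$ are distinct across blocks by construction.

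The hard part will be the necessity argument, in particular the structural step that $Z$ must be block-diagonal in the magnitude partition of $R$ and that size-$2$ blocks cannot carry matched signs. The first hinges on the invariant-subspace observation for $T$ and $T^{2}$ combined with the invertibility consequences of $ZRZ=-R$; the second is exactly where Lemmas~\ref{diagonalizable} and~\ref{identity} interact with the non-derogatory condition to rule out the $Z_{l}=iI_{2}$ solutions that would otherwise be fully compatible with $ZRZ=-R$.
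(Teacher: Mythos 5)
Your proposal is correct and follows essentially the same route as the paper's proof: derive $ZRZ=-R$ from the diagonal-Hamiltonian constraint, use commutation with $R^{2}$ to force block-diagonality of $Z$ over the $|\omega|$-partition, combine Lemma~\ref{diagonalizable} with the non-derogatory condition (Lemmas~\ref{similar} and~\ref{non-dero}) to cap block sizes at two and exclude same-sign $2\times 2$ blocks via Lemma~\ref{identity}, and then construct $R$ block by block with distinct magnitudes for sufficiency; your handling of the zero-frequency modes up front is just a cosmetic reorganization of the paper's ordering $0\le\bar{\omega}_{1}<\cdots<\bar{\omega}_{r}$. One small imprecision: in the sufficiency construction the choice ``$R=\omega\ne 0$ inside a $\Lambda$ block'' only satisfies $-zRz=R$ when $z=i$, so for a general $\Lambda$ block you must take $R=0$ (as the paper does), which is already among your stated options.
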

\begin{proof}
To establish the necessity, we assume that a pure Gaussian state can be  generated by a  linear quantum system subject to the two constraints~\ref{constraint1} and~\ref{constraint2}. We will show that the Gaussian graph matrix $Z$ of this pure Gaussian state can be written in the form of the equation~\eqref{thm1Z}. First, consider the system constraint~\ref{constraint1}, which is equivalent to saying that the Hamiltonian matrix is
\begin{align*}
G=\diag[\omega_{1},\; \omega_{2},\; \cdots, \omega_{N},\;\omega_{1},\; \omega_{2},\; \cdots, \omega_{N}]. 
\end{align*} 
 Using Lemma~\ref{lem1}, we have
\begin{numcases}{}
R=\diag[\omega_{1},\; \omega_{2},\; \cdots, \omega_{N}], \label{first constraint1}\\
XRX +YRY-\Gamma Y^{-1}X-XY^{-1}\Gamma^{\top}  = R,\label{first constraint2}\\
-X  R +\Gamma Y^{-1}=0. \label{first constraint3}
\end{numcases}
From~\eqref{first constraint3}, we obtain $\Gamma=XRY$. 
Substituting this into~\eqref{first constraint2} yields \begin{align}
YRY-XRX=R. \label{thm1-1}
\end{align}
Recall from Lemma~\ref{lem1} that $\Gamma$ is a skew symmetric matrix. Hence we have
\begin{align}
XRY+YRX=0. \label{thm1-2}
\end{align}
Combining~\eqref{thm1-1} and~\eqref{thm1-2} gives 
\begin{align}
-ZRZ=R, \label{ZRZ}
\end{align}
where $Z=X+iY$ is the Gaussian graph matrix for the pure Gaussian state. Then it follows that\footnote[1]{The authors acknowledge helpful discussions on mathoverflow.net, see \href{http://mathoverflow.net/questions/200501/quadratic-matrix-equation}{http://mathoverflow.net/questions/200501/quadratic-matrix-equation}}
\begin{align}
ZR^{2}=-ZR(ZRZ)=R^{2}Z. \label{RZ}
\end{align}
Next we use a permutation similarity to rearrange the main diagonal entries of $R$ in ascending order of their absolute values.  Suppose $\mathcal{P}_{1}\in \mathbb{R}^{N\times N }$ is a  permutation matrix such that
\begin{align*}
\bar{R}&\triangleq \mathcal{P}_{1}R\mathcal{P}_{1}^{\top}=\diag[\underbrace{\bar{\omega}_{1},\cdots, \bar{\omega}_{1}, -\bar{\omega}_{1},\cdots, -\bar{\omega}_{1}}_{p_{1}},\cdots,\\
&\hspace{3cm}  \underbrace{ \bar{\omega}_{r},\cdots, \bar{\omega}_{r}, -\bar{\omega}_{r},\cdots, -\bar{\omega}_{r}}_{p_{r}}],
\end{align*}
where $0\le\bar{\omega}_{1}< \cdots <\bar{\omega}_{r}$. Here $p_{j}\in \mathbb{Z}_{+}$ denotes the total number of $\bar{\omega}_{j}$ and $-\bar{\omega}_{j}$ that appear in the main diagonal of $R$. For example, $p_{j}=1$ means that one of $\bar{\omega}_{j}$ and $-\bar{\omega}_{j}$ does not appear in the main diagonal of $R$ and the other only appears once. Note that $\sum_{j=1}^{r}p_{j}=N$. 
The equation~\eqref{RZ} is transformed into $\bar{Z}\bar{R}^{2}=\bar{R}^{2}\bar{Z}$, where $\bar{Z}\triangleq \mathcal{P}_{1} Z\mathcal{P}_{1} ^{\top}$.  It follows immediately  that 
 $\bar{Z}$ is a block diagonal matrix, i.e., $
\bar{Z}=\diag[\bar{Z}_{1},\cdots,\bar{Z}_{r}]$,
with $\bar{Z}_{j}\in \mathbb{C}^{p_{j}\times p_{j}} $, $1\le j \le r $. From~\eqref{ZRZ}, we have $-\bar{Z}\bar{R} \bar{Z}= \bar{R}$, i.e., 
\begin{align}
&-\bar{Z}_{j}\bar{R}_{j}\bar{Z}_{j} =\bar{R}_{j}, \quad 1\le j \le r,  \label{ZRZ2}
\end{align}
where $\bar{R}_{j}\triangleq \diag[\underbrace{ \bar{\omega}_{j},\cdots, \bar{\omega}_{j}, -\bar{\omega}_{j},\cdots, -\bar{\omega}_{j}}_{p_{j}}]$ is a diagonal block in $\bar{R}$.
Then from~\eqref{ZRZ2}, we have   $\left(i\bar{R}_{j} \bar{Z}_{j} \right)^{2}=\bar{R}_{j}^{2}= \bar{\omega}_{j}^{2}I_{p_{j}}$, $1\le j \le r$.  If $\bar{\omega}_{j}=0$, since $\bar{R}_{j}=0_{p_{j}\times p_{j}}$ we have $\bar{R}_{j} \bar{Z}_{j}=0_{p_{j}\times p_{j}}$, which is a trivial diagonal matrix. If $\bar{\omega}_{j}\ne 0$, 
according to Lemma~\ref{diagonalizable}, $\bar{R}_{j} \bar{Z}_{j}$ is diagonalizable and its eigenvalues are either $i\bar{\omega}_{j}$ or $-i\bar{\omega}_{j}$.

Let us now turn to the constraint~\ref{constraint2}, which implies that the matrix $Q$ in~\eqref{rankconstraint} is a non-derogatory matrix. Since $Q=-iRY+Y^{-1}\Gamma=-iRY-RX=-RZ=-\mathcal{P}_{1}^{\top}\bar{R}\bar{Z}\mathcal{P}_{1}$, by Lemma~\ref{similar}, $\bar{R}\bar{Z}$ is a non-derogatory matrix. Since $\bar{R}\bar{Z}=\diag[\bar{R}_{1} \bar{Z}_{1}, \; \cdots, \; \bar{R}_{r} \bar{Z}_{r}]$, it follows that each diagonal block, $\bar{R}_{j} \bar{Z}_{j}$, $1 \le j\le r $, must be a non-derogatory matrix. But $\bar{R}_{j} \bar{Z}_{j}$ is diagonalizable and its eigenvalues are either $i\bar{\omega}_{j}$ or $-i\bar{\omega}_{j}$, so the size  $p_{j}$ of $\bar{R}_{j} \bar{Z}_{j}$ must be  $p_{j}=1\; \text{or}\;\; 2$. 
 
Consider the first block, $\bar{R}_{1} \bar{Z}_{1}$. If its size $p_{1}=1$, the equation~\eqref{ZRZ2} reduces to $-\bar{Z}_{1}\bar{\omega}_{1}\bar{Z}_{1} =\bar{\omega}_{1}$.   By solving it, we find 
 \begin{equation} \notag
  \bar{Z}_{1}= \left\{ 
 \begin{aligned}
 &x_{1}+iy_{1}\;\text{with}\; x_{1}\in \mathbb{R},\; y_{1}\in \mathbb{R}\;\text{and} \;y_{1}>0,\; \;\; \text{if}\;  \bar{\omega}_{1}=0,\\
 &i, \hspace{5.7cm}  \text{if}\;  \bar{\omega}_{1}\ne 0.
 \end{aligned}\right.
   \end{equation}
The two solutions above can be combined as $\bar{Z}_{1}= x_{1}+iy_{1}$ with $x_{1}\in \mathbb{R}$, $y_{1}\in \mathbb{R}$ and $y_{1}>0$.    
 If $p_{1}=2$, we shall distinguish three cases for the equation~\eqref{ZRZ2}:  
  \begin{align}
-\bar{Z}_{1}\bar{R}_{1}\bar{Z}_{1} =\bar{R}_{1}, \label{Z1equ}
 \end{align}
where  $\bar{R}_{1}=\diag[\bar{\omega}_{1},\;\bar{\omega}_{1}],\; \diag[\bar{\omega}_{1},\;-\bar{\omega}_{1}],\;\text{or}\;\; \diag[-\bar{\omega}_{1},\;-\bar{\omega}_{1}]$.  For each case, we have $\bar{\omega}_{1}\ne 0$, because otherwise it would follow that $\bar{R}_{1} \bar{Z}_{1}=0_{2\times 2}$, which is not a non-derogatory matrix. If  $\bar{R}_{1}=\diag[\bar{\omega}_{1},\;\bar{\omega}_{1}]\;  \text{or}\; \diag[-\bar{\omega}_{1},\;-\bar{\omega}_{1}]$, then the equation \eqref{Z1equ} implies  $-\bar{Z}_{1}^{2}=I_{2}$. By Lemma~\ref{identity}, $\bar{Z}_{1}=iI_{2}$ and  hence $\bar{R}_{1} \bar{Z}_{1}=\pm i\bar{\omega}_{1} I_{2}$, which is not a non-derogatory matrix. Therefore, we can only have $\bar{R}_{1}=\diag[\bar{\omega}_{1},\;-\bar{\omega}_{1}]$. Then the equation \eqref{Z1equ} reduces to $-\bar{Z}_{1}\diag[1,\;-1]\bar{Z}_{1} =\diag[1,\;-1]$. That is, $\left(\diag[1,\;-1]\bar{Z}_{1}\right)^{2}=-I_{2}$. By Lemma~\ref{twodimensional}, the two eigenvalues of $\diag[\bar{\omega}_{1},\;-\bar{\omega}_{1}]\bar{Z}_{1}$ are $i\bar{\omega}_{1}$ and $-i\bar{\omega}_{1}$, and hence it  is a non-derogatory matrix.    
So from the discussions above, we conclude that $\bar{Z}_{1}$ is either a complex number $x_{1}+iy_{1}$ with $x_{1}\in \mathbb{R}$, $y_{1}\in \mathbb{R}$ and $y_{1}>0$,  or a $2\times 2$ matrix satisfying $\left(\diag[1,\;-1]\bar{Z}_{1}\right)^{2}=-I_{2}$.  In a similar way, we can show that the remaining diagonal blocks $\bar{Z}_{j}$, $2\le j \le r$,  are either a complex number $i$ (since $\bar{\omega}_{j}\ne 0$) or a $2\times 2$ matrix satisfying $\left(\diag[1,\;-1]\bar{Z}_{j}\right)^{2}=-I_{2}$. 
 
Now we obtain that $\bar{Z}=\diag[\bar{Z}_{1},\cdots,\bar{Z}_{r}]$, where $\bar{Z}_{1}\in \Lambda \; \text{or}\; \Xi $, and $\bar{Z}_{j}=i$ or $\bar{Z}_{j}\in \Xi$, $2\le j \le r$. If the size $N$ of  $\bar{Z}$ is an even number, we can always use a permutation similarity to transform $\bar{Z}$ into $\tilde{Z}=\mathcal{P}_{2}\bar{Z}\mathcal{P}_{2}^{\top}=\diag[\tilde{Z}_{1},\cdots,\tilde{Z}_{\frac{N}{2}}]$, where $\mathcal{P}_{2}\in\mathbb{R}^{N\times N}$ is a permutation matrix, $\tilde{Z}_{1}\in \Pi\; \text{or}\; \Xi$, and 
$\tilde{Z}_{j}\in \Xi$, $ 2\le j\le\frac{N}{2}$. Here we have used the fact that $iI_{2}\in \Xi$. Similarly, if $N$ is an odd number, we can always use a permutation similarity to transform $\bar{Z}$ into  $\tilde{Z}=\mathcal{P}_{2}\bar{Z}\mathcal{P}_{2}^{\top}=\diag[\tilde{Z}_{1},\cdots,\tilde{Z}_{\frac{N+1}{2}}]$, where $\tilde{Z}_{1}\in \Lambda$, and 
$\tilde{Z}_{j}\in \Xi$, $2\le j \le \frac{N+1}{2}$.  Therefore, the matrix  $\bar{Z}$ is permutation similar to a matrix $\tilde{Z}=\diag[\tilde{Z}_{1},\cdots,\tilde{Z}_{\lfloor \frac{N+1}{2} \rfloor}]$ where $\tilde{Z}_{1}\in\Lambda,\;\;\Pi,\;\;\text{or}\;\; \Xi$, and $\tilde{Z}_{j}\in \Xi$, $ 2\le j\le \lfloor \frac{N+1}{2} \rfloor$. Let $\mathcal{P}=\mathcal{P}_{2}  \mathcal{P}_{1}$. Then we have $Z= \mathcal{P}_{1} ^{\top} \bar{Z} \mathcal{P}_{1}= \mathcal{P}_{1} ^{\top}\mathcal{P}_{2}^{\top}\tilde{Z}\mathcal{P}_{2}  \mathcal{P}_{1} =\mathcal{P} ^{\top} \tilde{Z} \mathcal{P}$. Obviously, $\mathcal{P}$ is a permutation matrix. This completes the necessity part of the proof.

To establish the sufficiency, suppose that the Gaussian graph matrix $Z$  of a pure Gaussian state is permutation similar to a block diagonal matrix $\tilde{Z}$, i.e., $Z =\mathcal{P}^{\top}\tilde{Z}\mathcal{P}$, 
 where $\mathcal{P}\in\mathbb{R}^{N\times N}$ is a permutation matrix and $\tilde{Z}=\diag[\tilde{Z}_{1},\cdots,\tilde{Z}_{\lfloor \frac{N+1}{2} \rfloor}]$ has been specified in Theorem~\ref{thm1}. We now construct a  linear quantum system that satisfies the two constraints~\ref{constraint1} and~\ref{constraint2},  and that also generates the given pure Gaussian state. First, we construct a  diagonal matrix $\tilde{R}=\diag[\tilde{R}_{1}, \; \cdots, \; \tilde{R}_{\lfloor \frac{N+1}{2} \rfloor}]$, where
 \begin{equation} \label{construction}
 \tilde{R}_{1}= \left\{ 
 \begin{aligned}
 &0, \hspace{1.8cm} \text{if}    \; \tilde{Z}_{1}\in\Lambda,\\
 &\diag[0,\;1],\hspace{0.57cm} \text{if}    \; \tilde{Z}_{1}\in\Pi\; \text{and}\; \tilde{Z}_{1}\ne iI_{2},\\
 &\diag[1,\;-1], \hspace{0.15cm} \;\; \text{if}\;\; \tilde{Z}_{1} \in  \Xi, 
 \end{aligned}\right. 
    \end{equation}
    and $ \tilde{R}_{j}=\diag[j,\;-j], \quad  2\le j\le \lfloor \frac{N+1}{2} \rfloor$.
Then by direct substitution, we can verify that $-\tilde{Z}_{j}\tilde{R}_{j}\tilde{Z}_{j}= \tilde{R}_{j}$, $1\le j\le \lfloor \frac{N+1}{2} \rfloor $. Hence $-\tilde{Z}\tilde{R} \tilde{Z}= \tilde{R}$. In Lemma~\ref{lem1}, we choose $R=\mathcal{P}^{\top}\tilde{R}\mathcal{P}$ and $\Gamma=XRY$. Then $R$ is a diagonal matrix and $-ZRZ=R$. It follows that $YRY-XRX=R$ and $XRY+YRX=0$. Hence we have $\Gamma+\Gamma^{\top}=0$, i.e., $\Gamma$ is a skew symmetric matrix. Substituting $R$ and $\Gamma$ into~\eqref{G} yields $G=\diag[R,\;\;R]$. Therefore  the resulting system Hamiltonian $\hat{H}=\frac{1}{2}\hat{x}^{\top}G\hat{x}$  satisfies the first constraint~\ref{constraint1}. 
 
Next we show that the matrix  $Q$ in~\eqref{rankconstraint} is a non-derogatory matrix. Since $Q=-RZ=-\mathcal{P}^{\top}\tilde{R}\tilde{Z}\mathcal{P}$, using Lemma~\ref{similar}, we have to show that $\tilde{R}\tilde{Z}$ is a non-derogatory matrix.  Note that $\tilde{R}\tilde{Z}=\diag[\tilde{R}_{1} \tilde{Z}_{1}, \; \cdots, \; \tilde{R}_{\lfloor \frac{N+1}{2} \rfloor} \tilde{Z}_{\lfloor \frac{N+1}{2} \rfloor}]$. If $\tilde{Z}_{1}\in\Lambda$, then according to~\eqref{construction}, we have $\tilde{R}_{1} \tilde{Z}_{1}=0$.  If $\tilde{Z}_{1} \in \Pi$ and $\tilde{Z}_{1}\ne iI_{2}$, we have $\tilde{R}_{1} \tilde{Z}_{1}=\diag[0,\;i]$.  If $\tilde{Z}_{1}\in\Xi$, by Lemma~\ref{twodimensional}, $\tilde{R}_{1} \tilde{Z}_{1}$ is diagonalizable and the eigenvalues of $\tilde{R}_{1} \tilde{Z}_{1}$ are $\pm i$. Similarly, it can be shown that $\tilde{R}_{j} \tilde{Z}_{j}$ is diagonalizable and the eigenvalues of $\tilde{R}_{j} \tilde{Z}_{j}$ are $\pm ji$, $2\le j \le \lfloor \frac{N+1}{2} \rfloor $. From the discussions above, we conclude that the block diagonal matrix $\tilde{R}\tilde{Z}$ is diagonalizable and all of its eigenvalues are distinct. Therefore, using Definition~\ref{def}, it can be shown that $\tilde{R}\tilde{Z}$ is a non-derogatory matrix. As a result, $Q$ is also a non-derogatory matrix. Using Lemma~\ref{non-dero}, we can always find a column vector $P\in \mathbb{C}^{N}$ such that the rank condition~\eqref{rankconstraint} is satisfied. Substituting this $P$ into~\eqref{C}, we will obtain a desired coupling vector $\hat{L}$ that satisfies the second constraint~\ref{constraint2}. This completes the proof. 
\end{proof}

Depending on whether $N$ is even or not, we can distinguish two cases for Theorem~\ref{thm1}. The result is summarized in the following corollary. 
\begin{coro} \label{coro0}
Given an $N$-mode pure Gaussian state, (\textrm{i}) if $N$ is even, the pure Gaussian state can be  generated by a  linear quantum system subject to the two constraints~\ref{constraint1} and~\ref{constraint2} if and only if its Gaussian graph matrix $Z$ can be written as  
\begin{align*}
Z=\mathcal{P}^{\top}\tilde{Z}\mathcal{P},\quad \tilde{Z}=\diag[\tilde{Z}_{1},\cdots,\tilde{Z}_{ N/2 }],
\end{align*}
 where $\mathcal{P}\in\mathbb{R}^{N\times N}$ is a permutation matrix, $\tilde{Z}_{1}\in\left(\Pi\cup\Xi\right)$, and $\tilde{Z}_{j}\in \Xi$, $ j=2,\cdots, N/2$; (\textrm{ii}) 
if $N$ is odd, the  pure Gaussian state can be  generated by a  linear quantum system subject to the two constraints~\ref{constraint1} and~\ref{constraint2} if and only if its Gaussian graph matrix $Z$ can be written as 
\begin{align*}
Z=\mathcal{P}^{\top}\tilde{Z}\mathcal{P},\quad \tilde{Z}=\diag[\tilde{Z}_{1},\cdots,\tilde{Z}_{ (N+1)/2 }],
\end{align*}
 where $\mathcal{P}\in\mathbb{R}^{N\times N}$ is a permutation matrix, $\tilde{Z}_{1}\in\Lambda$, and $\tilde{Z}_{j}\in \Xi$, $ j=2,\cdots, (N+1)/2 $.
 \end{coro}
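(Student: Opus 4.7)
The plan is to derive Corollary~\ref{coro0} as an immediate consequence of Theorem~\ref{thm1} by a dimension-counting argument that is sensitive to the parity of $N$. The essential observation is that the three admissible sets for $\tilde{Z}_{1}$ consist of matrices of fixed sizes: elements of $\Lambda$ are scalars ($1\times 1$), while elements of $\Pi$ and $\Xi$ are $2\times 2$ matrices. Hence the size of each diagonal block $\tilde{Z}_{j}$ in the decomposition of Theorem~\ref{thm1} is either $1$ or $2$, and since $\tilde{Z}$ must be $N \times N$, the parity of $N$ is constrained by which set $\tilde{Z}_{1}$ lies in.

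First I would invoke Theorem~\ref{thm1} to reduce the question to the decomposition $Z=\mathcal{P}^{\top}\tilde{Z}\mathcal{P}$ with $\tilde{Z}=\diag[\tilde{Z}_{1},\cdots,\tilde{Z}_{\lfloor (N+1)/2 \rfloor}]$, where $\tilde{Z}_{1}\in\Lambda\cup\Pi\cup\Xi$ and $\tilde{Z}_{j}\in\Xi$ for $j\ge 2$. Since every block with $j\ge 2$ has size $2$, those $\lfloor (N+1)/2 \rfloor -1$ blocks together occupy exactly $2\lfloor (N+1)/2 \rfloor -2$ rows and columns of $\tilde{Z}$; the size of $\tilde{Z}_{1}$ is then determined by the residual $N - (2\lfloor (N+1)/2 \rfloor -2)$.

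For $N$ even we have $\lfloor (N+1)/2 \rfloor = N/2$, so the remaining block $\tilde{Z}_{1}$ must have size $2$, which forces $\tilde{Z}_{1}\in\Pi\cup\Xi$ and excludes $\Lambda$; this gives case (\textrm{i}). For $N$ odd we have $\lfloor (N+1)/2 \rfloor = (N+1)/2$, so $\tilde{Z}_{1}$ must have size $1$, which forces $\tilde{Z}_{1}\in\Lambda$ and excludes $\Pi$ and $\Xi$; this gives case (\textrm{ii}). Each of these restricted characterizations is also sufficient because it is merely a specialization of the decomposition already shown sufficient in Theorem~\ref{thm1}.

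No substantial obstacle is anticipated: the corollary is a pure bookkeeping consequence of the block sizes in Theorem~\ref{thm1} combined with the parity of $N$. The only minor care required is to verify the arithmetic of $\lfloor (N+1)/2 \rfloor$ in each parity case, which is immediate.
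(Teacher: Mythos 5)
Your proposal is correct and follows essentially the same route as the paper: the paper presents Corollary~\ref{coro0} as an immediate parity-based case distinction of Theorem~\ref{thm1}, and the even/odd dichotomy for $\tilde{Z}_{1}$ (size $2$ forcing $\Pi\cup\Xi$ when $N$ is even, size $1$ forcing $\Lambda$ when $N$ is odd) is exactly the bookkeeping already carried out at the end of the necessity part of the proof of Theorem~\ref{thm1}. Your block-size counting via $N-\left(2\lfloor (N+1)/2\rfloor-2\right)$ is a clean way of making that implicit step explicit.
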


Next, we give an equivalent description for $\Xi$. 
\begin{thm} \label{lemset}
\begin{align*}
\Xi=\Phi,
\end{align*} 
where  
\begin{align*}
&    \Phi \triangleq \Bigg\{\begin{bmatrix}
z_{11} &z_{12}\\
z_{12} &z_{11}
\end{bmatrix}\in\mathbb{C}^{2\times 2}\Bigg|\; z_{12}^{2}=z_{11}^{2}+1 \;\; \text{and} \;\; \im (z_{11})>0\Bigg\}. 
\end{align*}
\end{thm}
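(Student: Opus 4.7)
The plan is to verify the two inclusions $\Xi\subseteq\Phi$ and $\Phi\subseteq\Xi$ by reducing the matrix identity $(\diag[1,-1]\mathpzc{Z})^{2}=-I_{2}$ to scalar equations, and then to handle the positive-definiteness of $\im(\mathpzc{Z})$ as a separate check in each direction.

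For $\Xi\subseteq\Phi$, I would parametrize a generic symmetric $\mathpzc{Z}=\begin{bmatrix}a&b\\ b&d\end{bmatrix}$ and compute $(\diag[1,-1]\mathpzc{Z})^{2}$ entrywise, which yields the three scalar conditions $a^{2}-b^{2}=-1$, $d^{2}-b^{2}=-1$, and $b(a-d)=0$. The first two equations force $a^{2}=d^{2}$. The case $a=d$ immediately puts $\mathpzc{Z}$ into the form demanded by $\Phi$ with $z_{12}^{2}=z_{11}^{2}+1$, and the additional requirement $\im(z_{11})>0$ follows automatically because the diagonal entries of the positive-definite matrix $\im(\mathpzc{Z})$ must themselves be positive. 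The case $a=-d$ with $a\ne 0$ forces $b=0$ from the third equation, so $a^{2}=-1$ and $\im(\mathpzc{Z})=\diag[\pm 1,\mp 1]$ is not positive definite, a contradiction; the trivial subcase $a=d=0$ would give $b^{2}=1$ with $b$ real, again making $\im(\mathpzc{Z})$ non-positive-definite. Hence only the case $a=d$ survives, giving $\mathpzc{Z}\in\Phi$.

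For $\Phi\subseteq\Xi$, the symmetry is built into the definition of $\Phi$, and a direct substitution of $z_{12}^{2}=z_{11}^{2}+1$ into the expansion of $(\diag[1,-1]\mathpzc{Z})^{2}$ verifies the required matrix identity at once. The main obstacle will be upgrading the scalar hypothesis $\im(z_{11})>0$ to the full matrix inequality $\im(\mathpzc{Z})>0$ demanded by $\Xi$. The key trick I would use is to factor the hyperbolic relation $z_{12}^{2}-z_{11}^{2}=1$ by setting $\alpha\triangleq z_{11}+z_{12}$ and $\beta\triangleq z_{12}-z_{11}$, so that $\alpha\beta=1$. Consequently $\alpha\ne 0$ and $\im(\beta)=-\im(\alpha)/|\alpha|^{2}$, which means $\im(\alpha)$ and $\im(\beta)$ have opposite signs whenever either is nonzero. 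A short computation then gives
\begin{align*}
\det\im(\mathpzc{Z})=\im(z_{11})^{2}-\im(z_{12})^{2}=-\im(\alpha)\im(\beta)=\frac{\im(\alpha)^{2}}{|\alpha|^{2}}.
\end{align*}
The hypothesis $\im(z_{11})=(\im(\alpha)-\im(\beta))/2>0$ rules out $\im(\alpha)=0$, so $\det\im(\mathpzc{Z})>0$; combined with $\im(z_{11})>0$, Sylvester's criterion yields $\im(\mathpzc{Z})>0$ and closes the argument.
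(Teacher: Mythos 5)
Your proof is correct, and its overall architecture matches the paper's: both directions reduce $(\diag[1,-1]\mathpzc{Z})^{2}=-I_{2}$ to the same three scalar equations, the inclusion $\Xi\subseteq\Phi$ is settled by positivity of the diagonal entries of $\im(\mathpzc{Z})$ together with a short case analysis (you split on $a=\pm d$ where the paper splits on $z_{12}=0$ versus $z_{12}\ne 0$, but the content is identical), and the converse hinges on showing $\det\im(\mathpzc{Z})=\im(z_{11})^{2}-\im(z_{12})^{2}>0$. The one genuinely different ingredient is how you establish that determinant inequality: the paper writes $z_{11}=\mu_{1}+i\nu_{1}$, $z_{12}=\mu_{2}+i\nu_{2}$, extracts the two real equations from $z_{11}^{2}-z_{12}^{2}=-1$, multiplies through by $\nu_{1}^{2}$ and substitutes $\mu_{1}\nu_{1}=\mu_{2}\nu_{2}$ to land on $\nu_{1}^{2}-\nu_{2}^{2}=\nu_{1}^{2}/(\mu_{2}^{2}+\nu_{1}^{2})$, whereas you factor the hyperbolic relation as $\alpha\beta=1$ with $\alpha=z_{11}+z_{12}$, $\beta=z_{12}-z_{11}$, deduce $\im(\beta)=-\im(\alpha)/|\alpha|^{2}$, and read off $\det\im(\mathpzc{Z})=-\im(\alpha)\im(\beta)=\im(\alpha)^{2}/|\alpha|^{2}$. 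Your factorization is cleaner and makes the sign structure transparent (it also explains \emph{why} the determinant is nonnegative, with equality only when $\alpha$ is real, which $\im(z_{11})>0$ forbids); the paper's computation is more pedestrian but requires no change of variables. You are also slightly more careful than the paper in explicitly noting that membership in $\Phi$ must be checked to imply the matrix identity itself, not just the positive-definiteness condition.
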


\begin{proof}
Suppose $\mathpzc{Z}=\begin{bmatrix}
z_{11} &z_{12}\\
z_{12} &z_{22}
\end{bmatrix}\in \Xi$. Because $\im\left(\mathpzc{Z}\right)>0$, we have $\im\left(z_{11}\right)>0$ and $\im\left(z_{22}\right)>0$. Let us consider the equation $\bigg(\diag[1,-1]\mathpzc{Z} \bigg)^{2}=-I_{2}$  described in $\Xi$. Substituting $\mathpzc{Z}=\begin{bmatrix}
z_{11} &z_{12}\\
z_{12} &z_{22}
\end{bmatrix}$ into the equation  $\bigg(\diag[1,-1]\mathpzc{Z} \bigg)^{2}=-I_{2}$, we obtain
\begin{numcases}{}
z_{11}^{2}-z_{12}^{2}=-1,  \label{chapter3_e21}\\
z_{11}z_{12}-z_{12}z_{22}=0, \label{chapter3_e22}\\
z_{22}^{2} -z_{12}^{2}=-1.  \label{chapter3_e23}
\end{numcases}
If $z_{12}=0$, because $\im(\mathpzc{Z})>0$, we have $z_{11}=z_{22}=i$. If $z_{12}\ne 0$, it follows from~\eqref{chapter3_e22} that $z_{11}=z_{22}$. Therefore, in both cases, we have $\mathpzc{Z}=\begin{bmatrix}
z_{11} &z_{12}\\
z_{12} &z_{11}
\end{bmatrix}$. Next we show that if $\im (z_{11})>0$ and $z_{11}^{2}-z_{12}^{2}=-1$, then the condition $\im(\mathpzc{Z})>0$ is always satisfied. Let us assume $z_{11}=\mu_{1}+i\nu_{1}$, where $\nu_{1}>0$, and $z_{12}=\mu_{2}+i\nu_{2}$. Then it follows from~\eqref{chapter3_e21} that 
\begin{numcases}{}
\mu_{1}^{2}-\nu_{1}^{2}-\mu_{2}^{2}+\nu_{2}^{2}=-1,  \label{chapter3_e24}\\
\mu_{1}\nu_{1}- \mu_{2}\nu_{2}=0.  \label{chapter3_e25}
\end{numcases}
Multiplying both sides of Equation~\eqref{chapter3_e24} by $\nu_{1}^{2}$, we obtain 
\begin{align*}
\mu_{1}^{2}\nu_{1}^{2}-\nu_{1}^{4}-\mu_{2}^{2}\nu_{1}^{2}+\nu_{1}^{2}\nu_{2}^{2}=-\nu_{1}^{2}.
\end{align*}
Using \eqref{chapter3_e25}, we have 
\begin{align*}
\mu_{2}^{2}\nu_{2}^{2}-\nu_{1}^{4}-\mu_{2}^{2}\nu_{1}^{2}+\nu_{1}^{2}\nu_{2}^{2}&=-\nu_{1}^{2},\\
\left(\mu_{2}^{2}+\nu_{1}^{2}\right)\left(\nu_{2}^{2}-\nu_{1}^{2}\right)&=-\nu_{1}^{2},\\
 \nu_{1}^{2}-\nu_{2}^{2}&=\frac{\nu_{1}^{2}}{\mu_{2}^{2}+\nu_{1}^{2}}.
\end{align*}
Since $\nu_{1}>0$, we have $\nu_{1}^{2}-\nu_{2}^{2}>0$. That is, $\im(\mathpzc{Z})>0$.  Therefore, we have  
\begin{align*}
\Xi = \Bigg\{\begin{bmatrix}
z_{11} &z_{12}\\
z_{12} &z_{11}
\end{bmatrix}\Bigg|\; z_{12}^{2}=z_{11}^{2}+1 \;\; \text{and} \;\; \im (z_{11})>0\Bigg\}. 
\end{align*}
That is, $\Xi = \Phi$. This completes the proof.  
\end{proof} 

The following result is an immediate application  of Theorem~\ref{lemset}. It gives a complete parametrization of the class of pure Gaussian states that can be generated by  linear quantum systems subject to the  constraints~\ref{constraint1} and~\ref{constraint2}.
\begin{coro} \label{coro2}
An $N$-mode pure Gaussian state can be  generated by a  linear quantum system subject to the two constraints~\ref{constraint1} and~\ref{constraint2} if and only if its Gaussian graph matrix $Z$ can be written as 
\begin{align*}
Z=\mathcal{P}^{\top}\tilde{Z}\mathcal{P},\quad \tilde{Z}=\diag[\tilde{Z}_{1},\cdots,\tilde{Z}_{\lfloor \frac{N+1}{2} \rfloor}],
\end{align*}
where $\mathcal{P}\in\mathbb{R}^{N\times N}$ is a permutation matrix, \small{$\tilde{Z}_{1}\in\left(\Lambda\cup\Pi\cup\Phi \right)$}, and $\tilde{Z}_{j}\in \Phi $, $ j=2,\cdots,  \lfloor\frac{N+1}{2} \rfloor$.
\end{coro}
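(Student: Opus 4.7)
The plan is to obtain this corollary as a direct, essentially one-step consequence of Theorem~\ref{thm1} combined with Theorem~\ref{lemset}. Theorem~\ref{thm1} already delivers the precise ``if and only if'' characterization, just phrased in terms of the set $\Xi$ of $2\times 2$ symmetric complex matrices $\mathpzc{Z}$ with $\im(\mathpzc{Z})>0$ and $(\diag[1,-1]\mathpzc{Z})^{2}=-I_{2}$. So the only remaining content is to replace $\Xi$ by the more concrete parametric description $\Phi$.

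First I would invoke Theorem~\ref{thm1} directly: an $N$-mode pure Gaussian state is generated by a linear quantum system satisfying constraints~\ref{constraint1} and~\ref{constraint2} precisely when its Gaussian graph matrix admits the block decomposition $Z=\mathcal{P}^{\top}\tilde{Z}\mathcal{P}$ with $\tilde{Z}_{1}\in(\Lambda\cup\Pi\cup\Xi)$ and $\tilde{Z}_{j}\in\Xi$ for $2\le j\le\lfloor(N+1)/2\rfloor$. Then I would apply Theorem~\ref{lemset}, which establishes the set identity $\Xi=\Phi$, and substitute $\Phi$ for every occurrence of $\Xi$ in the above statement. Since the sets $\Lambda$ and $\Pi$ are untouched and the membership conditions on the $\tilde{Z}_{j}$ are preserved termwise under the substitution, the characterization of Corollary~\ref{coro2} drops out immediately, in both directions.

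There is no real obstacle here — the substantive mathematical work has already been done in Theorem~\ref{thm1} (the structural characterization) and in Theorem~\ref{lemset} (the explicit parametrization of $\Xi$). The corollary is best presented as a short remark rather than a full proof: it simply repackages Theorem~\ref{thm1} into a form that exposes the $2\times 2$ blocks as symmetric Toeplitz-like matrices $\bigl[\begin{smallmatrix}z_{11}&z_{12}\\ z_{12}&z_{11}\end{smallmatrix}\bigr]$ with $z_{12}^{2}=z_{11}^{2}+1$ and $\im(z_{11})>0$, which is the parametric description one would actually use in applications.
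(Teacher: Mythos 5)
Your proposal is correct and matches the paper exactly: the paper presents Corollary~\ref{coro2} as an immediate application of Theorem~\ref{lemset} (the identity $\Xi=\Phi$) to the characterization already established in Theorem~\ref{thm1}, with no further argument given. Substituting $\Phi$ for $\Xi$ blockwise is all that is required, so your one-step justification is both sound and in the spirit of the original.
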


\normalsize

The following corollary gives an equivalent statement of Corollary~\ref{coro2}. 
\begin{coro} \label{coro3}
 Given an $N$-mode pure Gaussian state, (\textrm{i}) if $N$ is even, the pure Gaussian state can be  generated by a  linear quantum system subject to the two constraints~\ref{constraint1} and~\ref{constraint2} if and only if its Gaussian graph matrix $Z$ can be written as   
\begin{align*}
Z=\mathcal{P}^{\top}\tilde{Z}\mathcal{P},\quad \tilde{Z}=\diag[\tilde{Z}_{1},\cdots,\tilde{Z}_{ N/2 }],
\end{align*}
 where $\mathcal{P}\in\mathbb{R}^{N\times N}$ is a permutation matrix, $\tilde{Z}_{1}\in\left(\Pi\cup\Phi \right)$, and $\tilde{Z}_{j}\in \Phi $, $j=2,\cdots, N/2$; (\textrm{ii}) 
if $N$ is odd, the  pure Gaussian state can be  generated by a  linear quantum system subject to the two constraints~\ref{constraint1} and~\ref{constraint2} if and only if its Gaussian graph matrix $Z$ can be written as 
\begin{align*}
Z=\mathcal{P}^{\top}\tilde{Z}\mathcal{P},\quad \tilde{Z}=\diag[\tilde{Z}_{1},\cdots,\tilde{Z}_{ (N+1)/2 }],
\end{align*}
 where $\mathcal{P}\in\mathbb{R}^{N\times N}$ is a permutation matrix, $\tilde{Z}_{1}\in\Lambda$, and $\tilde{Z}_{j}\in \Phi $, $j=2,\cdots, (N+1)/2$.
\end{coro}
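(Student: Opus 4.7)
The plan is to deduce Corollary~\ref{coro3} directly from Corollary~\ref{coro2} by a dimension count. The sets $\Lambda$, $\Pi$, and $\Phi$ consist of matrices of sizes $1\times 1$, $2\times 2$, and $2\times 2$ respectively. Because the blocks $\tilde{Z}_j$ for $j=2,\ldots,\lfloor(N+1)/2\rfloor$ in Corollary~\ref{coro2} all lie in $\Phi$ and hence each has size $2\times 2$, these blocks collectively occupy $2\left(\lfloor (N+1)/2 \rfloor - 1\right)$ rows of $\tilde{Z}$. Since $\tilde{Z}$ has total size $N\times N$, the size of the first block $\tilde{Z}_1$ is forced by the parity of $N$, and this is the only thing that distinguishes the two cases of Corollary~\ref{coro3} from Corollary~\ref{coro2}.

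First I would handle the case $N$ even. Here $\lfloor (N+1)/2 \rfloor = N/2$, so the blocks $\tilde{Z}_2,\ldots,\tilde{Z}_{N/2}$ together fill $N-2$ rows and columns of $\tilde{Z}$. Consequently $\tilde{Z}_1$ must be $2\times 2$, which rules out $\tilde{Z}_1\in\Lambda$ (whose elements are scalars) and leaves $\tilde{Z}_1\in\Pi\cup\Phi$, matching part~(\textrm{i}). Next I would treat $N$ odd: now $\lfloor (N+1)/2 \rfloor = (N+1)/2$ and the blocks $\tilde{Z}_2,\ldots,\tilde{Z}_{(N+1)/2}$ fill $N-1$ rows, forcing $\tilde{Z}_1$ to be $1\times 1$ and hence $\tilde{Z}_1\in\Lambda$, matching part~(\textrm{ii}).

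The converse direction in each case is immediate: any $\tilde{Z}$ of the form permitted by Corollary~\ref{coro3} is a special instance of the block structure allowed by Corollary~\ref{coro2}, with the same permutation $\mathcal{P}$. Thus the two corollaries describe exactly the same family of Gaussian graph matrices $Z$, partitioned according to parity. There is no substantive obstacle here beyond careful bookkeeping of block sizes; the content of the result lies entirely in Corollary~\ref{coro2}, and Corollary~\ref{coro3} merely repackages it in a form that makes the parity restriction on $\tilde{Z}_1$ explicit.
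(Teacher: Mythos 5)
Your dimension-count argument is correct and is essentially the paper's own (implicit) reasoning: the paper states Corollary~\ref{coro3} without a separate proof, treating it as an equivalent parity-split restatement of Corollary~\ref{coro2}, precisely because the blocks $\tilde{Z}_j\in\Phi$ for $j\ge 2$ are all $2\times 2$ and therefore force $\tilde{Z}_1$ to be $2\times 2$ (hence in $\Pi\cup\Phi$) when $N$ is even and $1\times 1$ (hence in $\Lambda$) when $N$ is odd. Nothing further is needed.
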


\section{Example}\label{Example}
\begin{exam} \label{Example2}
To illustrate the main result of the paper, we consider the generation of {\it  two-mode squeezed states}~\cite{WC14:pra,MFL11:pra}. Two-mode squeezed states are  an important resource in several quantum information protocols such as quantum cryptography and  quantum 
teleportation.  The covariance matrix of a canonical two-mode squeezed state is given by \begin{align*} 
V=\frac{1}{2}\begin{bmatrix}
\cosh(2\alpha) &\sinh(2\alpha) &0 &0\\
\sinh(2\alpha) &\cosh(2\alpha) &0 &0\\
0 &0 &\cosh(2\alpha) &-\sinh(2\alpha)\\
0 &0 &-\sinh(2\alpha) &\cosh(2\alpha)
\end{bmatrix},
\end{align*}
where $\alpha $ is the squeezing parameter.  Applying the factorization~\eqref{covariance}, we obtain the Gaussian graph matrix for a canonical two-mode squeezed state, i.e.,  $
Z=X+iY$, where $X=0$ and $Y=\begin{bmatrix}
\cosh(2\alpha)  &-\sinh(2\alpha)\\
-\sinh(2\alpha) &\cosh(2\alpha)  
\end{bmatrix}$. By direct calculation, we find $Z\in\Xi$. According to Theorem~\ref{thm1}, every two-mode squeezed state can be  generated by a linear  quantum system subject to the two constraints~\ref{constraint1} and~\ref{constraint2}. To construct such a system, let us choose $R=\diag[1,\; -1]$, $\Gamma=XRY=0$ and $P=i\frac{\cosh(\alpha)+\sinh(\alpha)}{\sqrt{2}}\begin{bmatrix}
1 &1\end{bmatrix}^{\top}$. It can be verified by direct substitution that the rank condition~\eqref{rankconstraint} holds. Then based on Lemma~\ref{lem1}, the resulting  linear quantum system is strictly stable and  generates the desired two-mode squeezed state.  
The system Hamiltonian $\hat{H}$ is given by
$\hat{H}=\frac{1}{2}\hat{x}^{\top}G\hat{x}=\frac{1}{2}(\hat{q}_{1}^{2}+\hat{p}_{1}^{2}-\hat{q}_{2}^{2}-\hat{p}_{2}^{2})=\hat{a}_{1}^{\ast}\hat{a}_{1}-\hat{a}_{2}^{\ast}\hat{a}_{2}$, where $\hat{a}_{j}=(\hat{q}_{j}+i\hat{p}_{j})/\sqrt{2}$ and $\hat{a}_{j}^{\ast}=(\hat{q}_{j}-i\hat{p}_{j})/\sqrt{2}$ denote the annihilation operator and creation operator for the $j$th mode, respectively. It can be seen that the system Hamiltonian $\hat{H}$ satisfies the first constraint~\ref{constraint1}. The coupling operator 
 $\hat{L}$ is given by $\hat{L}=C\hat{x}=
\frac{\cosh(\alpha)-\sinh(\alpha)}{\sqrt{2}}\left(\hat{q}_{1}+\hat{q}_{2}\right)+i\frac{\cosh(\alpha)+\sinh(\alpha)}{\sqrt{2}}\left(\hat{p}_{1}+\hat{p}_{2}\right)=\hat{a}_{1}\cosh(\alpha)-\hat{a}_{2}^{\ast}\sinh(\alpha)+\hat{a}_{2}\cosh(\alpha)-\hat{a}_{1}^{\ast}\sinh(\alpha) $, which 
satisfies the constraint~\ref{constraint2}. The coupling operator $\hat{L}$ is one of the nullifiers for the two-mode squeezed state~\cite{MFL11:pra}. Another nullifier is  
$\hat{L}_{2}= \frac{\cosh(\alpha)+\sinh(\alpha)}{\sqrt{2}} \left(\hat{q}_{1}-\hat{q}_{2}\right)+i\frac{\cosh(\alpha)-\sinh(\alpha)}{\sqrt{2}} \left(\hat{p}_{1}-\hat{p}_{2}\right)\; =\hat{a}_{1}\cosh(\alpha)-\hat{a}_{2}^{\ast}\sinh(\alpha)-\hat{a}_{2}\cosh(\alpha)+\hat{a}_{1}^{\ast}\sinh(\alpha) $. We observe that $\hat{H}=\frac{1}{2}(\hat{L}^{\ast}\hat{L}_{2}+\hat{L}_{2}^{\ast}\hat{L})$ is a beam-splitter-like interaction between the two nullifiers.  The amount of entanglement contained in a two-mode squeezed state can be quantified  using the logarithmic negativity $\mathcal{E}$~\cite{VW02:pra}.  This value is found to be $\mathcal{E}=2|\alpha |$. 

We mention that the above proposal is an idealization, since in any physical realization of a linear quantum system the existence of additional thermal noises is unavoidable. To describe this case, we include some auxiliary thermal noise inputs via additional coupling operators. That is, $
\hat{L}_{1\uparrow}=\sqrt{\gamma_{1}\bar{n}_{1}} \hat{a}_{1}^{\ast}$, $
\hat{L}_{1\downarrow}=\sqrt{\gamma_{1}\left(\bar{n}_{1}+1\right)}\hat{a}_{1}$, $\hat{L}_{2\uparrow}=\sqrt{\gamma_{2}\bar{n}_{2}}\hat{a}_{2}^{\ast}$, 
$\hat{L}_{2\downarrow}=\sqrt{\gamma_{2}\left(\bar{n}_{2}+1\right)}\hat{a}_{2}$, where  $\gamma_{1}$, $\gamma_{2}$ are damping rates and  $\bar{n}_{1}$, $\bar{n}_{2}$ denote the thermal occupations of the environments. This is indeed the case for an optomechanical realization~\cite{WC14:pra,ODPCWS16:prl,NTMPS17:pnas},  as described in Fig.~\ref{fig1}.  Due to these thermal noise inputs, the steady state generated is not exactly a two-mode squeezed state. For example, if we take $\alpha=0.7$, $\bar{n}_{1}=\bar{n}_{2}=10$ and $\gamma_{1}=\gamma_{2}=0.01$, the two-mode steady state generated by the linear quantum system with Hamiltonian $\hat{H}$ and coupling operators $\hat{L}$, $\hat{L}_{1\uparrow}$, $\hat{L}_{1\downarrow}$, $\hat{L}_{2\uparrow}$, $\hat{L}_{2\downarrow}$  has the covariance matrix
\begin{align*}
V=\begin{bmatrix}
    1.1943   & 0.9169  & -0.0047  & -0.0464\\
    0.9169   & 1.1943   & 0.0464   & 0.0047\\
   -0.0047   & 0.0464   & 1.1896  & -0.9638\\
   -0.0464   & 0.0047  & -0.9638   & 1.1896
\end{bmatrix}.
\end{align*} The purity of this state is found to be $p=0.4787$ and the entanglement value, quantified via the logarithmic negativity, is $\mathcal{E}=0.7134$. For comparison, the two-mode steady state generated by the linear quantum system with  Hamiltonian $\hat{H}$ and only the thermal noises $\hat{L}_{1\uparrow}$, $\hat{L}_{1\downarrow}$, $\hat{L}_{2\uparrow}$, $\hat{L}_{2\downarrow}$  has the covariance matrix
\begin{align} \label{VVV}
V=\begin{bmatrix}
    10.5   &0  &0  &0\\
    0   &10.5  &0   &0\\
   0   &0   &10.5  &0\\
   0   &0  &0  &10.5
\end{bmatrix}.
\end{align} The purity of this state, without the designed coupling operator $\hat{L}$, is found to be $p=0.0023$ and the entanglement is  $\mathcal{E}=0$. We see that  our designed coupling operator $\hat{L}$ both generates the entanglement and increases the purity of the steady state; see~\cite{WC14:pra} for more discussion on this issue.  

\begin{figure}[htbp]
\begin{center}
\includegraphics[height=2.5cm]{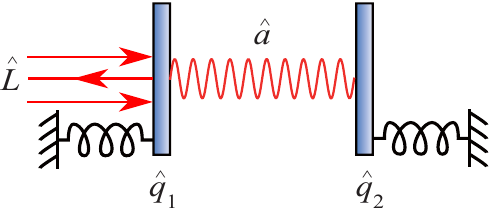}
\caption{
The optomechanical linear quantum  system that  generates the canonical two-mode squeezed state.  Two mechanical oscillators (denoted by $\hat{q}_{1}$ and $\hat{q}_{2}$, respectively) are
 coupled to a single cavity mode (denoted by $\hat{a}$). The cavity mode is used to  construct the required coupling operator $\hat{L}$. It responds rapidly to the mechanical motion and can be adiabatically eliminated to give an effective description of the dynamics of the two mechanical oscillator modes alone.}
\label{fig1}
\end{center}
\end{figure}
\end{exam}

\begin{exam}
Consider a two-mode pure Gaussian state with covariance matrix given by 
\begin{align*}
V=\begin{bmatrix}
\frac{\sqrt{6}}{2}  &-1 &0 &\frac{1}{2}\\
-1 &\frac{\sqrt{6}}{2}  &\frac{1}{2} &0\\
0 &\frac{1}{2} &\frac{\sqrt{6}}{2}  &1\\
\frac{1}{2} &0 &1 &\frac{\sqrt{6}}{2}
\end{bmatrix}.
\end{align*} Note that this state has entanglement between the two modes. Applying the factorization~\eqref{covariance}, we obtain the Gaussian graph matrix for this pure Gaussian state, i.e.,  $
Z=X+iY$, where $X=\begin{bmatrix}
1 &\frac{\sqrt{6}}{2}\\
\frac{\sqrt{6}}{2} &1
\end{bmatrix}$ and $Y=\begin{bmatrix}
\frac{\sqrt{6}}{2}  &1\\
1 &\frac{\sqrt{6}}{2}  
\end{bmatrix}$. By direct calculation, we find $Z\in\Xi$. According to Theorem~\ref{thm1}, the pure Gaussian state can be  generated by a linear  quantum system subject to the two constraints~\ref{constraint1} and~\ref{constraint2}. To construct such a system, we choose $R=\diag[1,\; -1]$, $\Gamma=XRY=\begin{bmatrix}
0  &-\frac{1}{2}\\
\frac{1}{2} &0
\end{bmatrix}$ and $P=[0\; \;\;1]^{\top}$. It can be verified by direct substitution that the rank condition~\eqref{rankconstraint} holds. Then by Lemma~\ref{lem1}, the resulting  linear quantum system is strictly stable and  generates the desired pure Gaussian state.  
The system Hamiltonian $\hat{H}$ is given by
$\hat{H}=\frac{1}{2}\hat{x}^{\top}G\hat{x}=\frac{1}{2}(\hat{q}_{1}^{2}+\hat{p}_{1}^{2}-\hat{q}_{2}^{2}-\hat{p}_{2}^{2})$, which satisfies the constraint~\ref{constraint1}. 
The coupling operator $\hat{L}$ is  $
\hat{L}=C\hat{x}=-\left(\frac{\sqrt{6}}{2}+i\right) \hat{q}_{1}-\left(1+\frac{\sqrt{6}}{2}i\right)\hat{q}_{2} +\hat{p}_{2}$,
which satisfies the constraint~\ref{constraint2}. An optomechanical realization of this system is similar to the realization shown in Fig.~\ref{fig1}, and hence is omitted.  The amount of entanglement contained in the state 
 is found to be $\mathcal{E}=1.5445$. Thus, the two modes are highly entangled. 

Similarly, we can add some auxiliary thermal noise inputs to the linear quantum system above to provide a more realistic model of a physical system, i.e., we add the coupling operators $
\hat{L}_{1\uparrow}=\sqrt{\gamma_{1}\bar{n}_{1}} \hat{a}_{1}^{\ast}$, $
\hat{L}_{1\downarrow}=\sqrt{\gamma_{1}\left(\bar{n}_{1}+1\right)}\hat{a}_{1}$, $\hat{L}_{2\uparrow}=\sqrt{\gamma_{2}\bar{n}_{2}}\hat{a}_{2}^{\ast}$, 
$\hat{L}_{2\downarrow}=\sqrt{\gamma_{2}\left(\bar{n}_{2}+1\right)}\hat{a}_{2}$. Let us take $\bar{n}_{1}=\bar{n}_{2}=10$ and $\gamma_{1}=\gamma_{2}=0.01$. In this case, the two-mode steady state generated by the linear quantum system with Hamiltonian $\hat{H}$ and coupling operators $\hat{L}$, $\hat{L}_{1\uparrow}$, $\hat{L}_{1\downarrow}$, $\hat{L}_{2\uparrow}$, $\hat{L}_{2\downarrow}$  has the covariance matrix
\begin{align*}
V=\begin{bmatrix}
    1.7298   &-1.2921   &-0.0439   & 0.4866\\
   -1.2921   & 1.4780   & 0.6209   & 0.0270\\
   -0.0439   & 0.6209   & 1.5698   & 1.0281\\
    0.4866   & 0.0270   & 1.0281   & 1.2790
\end{bmatrix}.
\end{align*} The purity of this state is $p=0.4175$ and the entanglement value is $\mathcal{E}=0.8479$. For comparison, the two-mode steady state generated by the linear quantum system with Hamiltonian $\hat{H}$ and only the thermal noises $\hat{L}_{1\uparrow}$, $\hat{L}_{1\downarrow}$, $\hat{L}_{2\uparrow}$, $\hat{L}_{2\downarrow}$  has the covariance matrix $V$ as given in~\eqref{VVV}. The purity of this state is $p=0.0023$ and the entanglement value, quantified via the logarithmic negativity, is $\mathcal{E}=0$. Hence  the two-mode steady state generated with the designed coupling operator $\hat{L}$ is highly pure and entangled compared to the case without  $\hat{L}$.  

\end{exam}
\begin{exam}
We consider an eight-mode pure Gaussian state with covariance matrix given by $
V=\left[ {\begin{array}{*{20}c} V_{11} &\vline &V_{12}  \\
\hline
V_{12}^{\top} &\vline &V_{22} 
\end{array} } \right]$, 
where 
\begin{align*}
&V_{11}=\\
&\diag\left[\begin{bmatrix}
\frac{\sqrt{6}}{2}  &-1 \\
-1 &\frac{\sqrt{6}}{2} 
\end{bmatrix}, \begin{bmatrix}
\frac{\sqrt{6}}{2}  &-1 \\
-1 &\frac{\sqrt{6}}{2} 
\end{bmatrix},\begin{bmatrix}
\frac{\sqrt{6}}{2}  &-1 \\
-1 &\frac{\sqrt{6}}{2} 
\end{bmatrix},\begin{bmatrix}
\frac{\sqrt{6}}{2}  &-1 \\
-1 &\frac{\sqrt{6}}{2} 
\end{bmatrix}\right],\\
&V_{22}=\\
&\diag\left[\begin{bmatrix}
\frac{\sqrt{6}}{2}  &1 \\
1 &\frac{\sqrt{6}}{2} 
\end{bmatrix}, \begin{bmatrix}
\frac{\sqrt{6}}{2}  &1 \\
1 &\frac{\sqrt{6}}{2} 
\end{bmatrix},\begin{bmatrix}
\frac{\sqrt{6}}{2}  &1 \\
1 &\frac{\sqrt{6}}{2} 
\end{bmatrix},\begin{bmatrix}
\frac{\sqrt{6}}{2}  &1 \\
1 &\frac{\sqrt{6}}{2} 
\end{bmatrix}\right],\\
&\text{and}\;\;V_{12}=\diag\left[\begin{bmatrix}
0  &\frac{1}{2} \\
\frac{1}{2} &0 
\end{bmatrix},\begin{bmatrix}
0  &\frac{1}{2} \\
\frac{1}{2} &0 
\end{bmatrix},\begin{bmatrix}
0  &\frac{1}{2} \\
\frac{1}{2} &0 
\end{bmatrix},\begin{bmatrix}
0  &\frac{1}{2} \\
\frac{1}{2} &0 
\end{bmatrix}\right].
\end{align*}
 Applying the factorization~\eqref{covariance}, we obtain the Gaussian graph matrix for this pure Gaussian state, i.e.,  $
Z=X+iY$, where 
\begin{align*}
&X=\\
&\diag\left[\begin{bmatrix}
1 &\frac{\sqrt{6}}{2}\\
\frac{\sqrt{6}}{2}   &1
\end{bmatrix}, \begin{bmatrix}
1 &\frac{\sqrt{6}}{2}\\
\frac{\sqrt{6}}{2}   &1
\end{bmatrix},\begin{bmatrix}
1 &\frac{\sqrt{6}}{2}\\
\frac{\sqrt{6}}{2}   &1
\end{bmatrix},\begin{bmatrix}
1 &\frac{\sqrt{6}}{2}\\
\frac{\sqrt{6}}{2}   &1
\end{bmatrix}\right],\\
&Y=\\
&\diag\left[\begin{bmatrix}
\frac{\sqrt{6}}{2}  &1 \\
1 &\frac{\sqrt{6}}{2} 
\end{bmatrix}, \begin{bmatrix}
\frac{\sqrt{6}}{2}  &1 \\
1 &\frac{\sqrt{6}}{2} 
\end{bmatrix},\begin{bmatrix}
\frac{\sqrt{6}}{2}  &1 \\
1 &\frac{\sqrt{6}}{2} 
\end{bmatrix},\begin{bmatrix}
\frac{\sqrt{6}}{2}  &1 \\
1 &\frac{\sqrt{6}}{2} 
\end{bmatrix}\right].
\end{align*}
By Theorem~\ref{thm1}, the pure Gaussian state can be  generated by a linear  quantum system subject to the two constraints~\ref{constraint1} and~\ref{constraint2}. To construct such a system, we choose 
\begin{align*}
R&=\diag[1,\; -1,\;\; 2,\; -2,\;\;3,\; -3,\;\;4,\; -4],\\
\Gamma&=
\diag\left[\begin{bmatrix}
0  &-\frac{1}{2}  \\
\frac{1}{2} &0
\end{bmatrix}, \begin{bmatrix}
0  &-1 \\
1 &0
\end{bmatrix},\begin{bmatrix}
0  &-\frac{3}{2} \\
\frac{3}{2} &0 
\end{bmatrix},\begin{bmatrix}
0  &-2 \\
2 &0
\end{bmatrix}\right],\\
P&=\begin{bmatrix}0  & 1 & 0 &1 &0 &1 &0 &1\end{bmatrix}^{\top}.
\end{align*} It can be verified by direct substitution that the rank condition~\eqref{rankconstraint} holds. According to Lemma~\ref{lem1}, the resulting  linear quantum system is strictly stable and  generates the desired pure Gaussian state.  
The system Hamiltonian $\hat{H}$ is given by
$\hat{H}=\frac{1}{2}\hat{x}^{\top}G\hat{x}=\frac{1}{2}(\hat{q}_{1}^{2}+\hat{p}_{1}^{2}-\hat{q}_{2}^{2}-\hat{p}_{2}^{2})+(\hat{q}_{3}^{2}+\hat{p}_{3}^{2}-\hat{q}_{4}^{2}-\hat{p}_{4}^{2})+\frac{3}{2}(\hat{q}_{5}^{2}+\hat{p}_{5}^{2}-\hat{q}_{6}^{2}-\hat{p}_{6}^{2})+2(\hat{q}_{7}^{2}+\hat{p}_{7}^{2}-\hat{q}_{8}^{2}-\hat{p}_{8}^{2})$, which satisfies the constraint~\ref{constraint1}. 
The coupling operator $\hat{L}$ is given by 
\begin{align*}
\hat{L}&=-\left(\frac{\sqrt{6}}{2}+i\right)( \hat{q}_{1}+ \hat{q}_{3}+ \hat{q}_{5}+ \hat{q}_{7})\\
&-\left(1+\frac{\sqrt{6}}{2}i\right)(\hat{q}_{2}+\hat{q}_{4}+\hat{q}_{6}+\hat{q}_{8} ) +\hat{p}_{2}+\hat{p}_{4} +\hat{p}_{6}+\hat{p}_{8},
\end{align*}
which satisfies the constraint~\ref{constraint2}. Finally, let us quantify the amount of entanglement contained in the given  state using the logarithmic negativity $\mathcal{E}$. It is found that the eight modes of the system can be divided into four groups, i.e., $(1,2)$, $(3,4)$, $(5,6)$ and $(7,8)$. When the system reaches its steady state, the two modes in each group are highly entangled (i.e., $\mathcal{E}_{(2j+1,2j+2)}=1.5445$, $j=0,1,2,3$), but no entanglement exists between the different groups. 
\end{exam}

\begin{rmk}
For an $N$-mode pure Gaussian state ($N\ge 2$), it is straightforward to verify that if it can be  generated by a linear  quantum system subject to the two constraints~\ref{constraint1} and~\ref{constraint2}, then we can divide the $N$ modes into $\lfloor \frac{N+1}{2} \rfloor$ groups with each group consisting of no more than $2$ modes.  Pairwise entanglement may exist between the two modes in the same group. However, no entanglement exists between the different groups. From Theorem~\ref{thm1}, it can be seen that the $N$-mode pure Gaussian state is the tensor product of several one or two mode pure Gaussian states, each of which can be  generated by a linear  quantum system subject to the two constraints~\ref{constraint1} and~\ref{constraint2}.  Therefore, the quantum system generating this $N$-mode pure Gaussian state can be regarded as a combination of the $\lfloor \frac{N+1}{2} \rfloor$ linear  quantum subsystems subject to the two constraints~\ref{constraint1} and~\ref{constraint2}. 
\end{rmk}

\section{Conclusion} \label{Conclusion}
In this paper, we consider linear quantum systems subject to constraints. First, we assume that the system Hamiltonian $\hat{H}$ is of the form $\hat{H}=\sum_{j=1}^{N}  \frac{\omega_{j}}{2}\left(\hat{q}_{j}^{2}+\hat{p}_{j}^{2}\right)$, $\omega_{j}\in \mathbb{R}$, $j=1,\cdots, N$. Second, we assume that the system is only coupled to a single dissipative environment. We then parametrize the class of pure Gaussian states that can be  generated by this particular type of linear quantum system. 

It should be mentioned that in any physical realization of a linear quantum system,  the existence of thermal  noises cannot be avoided. In future work, it would be interesting to investigate  the impact of including auxiliary  thermal noises on the steady-state  purity and entanglement. It would also be interesting to address the problem of how to design a coupling operator $\hat{L}$ such that the resulting linear quantum system is optimally robust against those thermal noise inputs. 

Another possible extension is to study the generation of pure Gaussian states under different system constraints. For example,  we have  recently considered a chain consisting of $(2\aleph+1)$ quantum harmonic oscillators with passive nearest-neighbour Hamiltonian interactions and with a single reservoir which acts locally on the central oscillator~\cite{MWPY17:jpa}. We have derived
a necessary  and sufficient condition for a pure Gaussian state to be generated by this type of quantum harmonic oscillator chain.
\bibliographystyle{ieeetr} 

\begin{thebibliography}{10}

\bibitem{HS87:ijc}
A.~Hotz and R.~E. Skelton, ``Covariance control theory,'' {\em International
  Journal of Control}, vol.~46, no.~1, pp.~13--32, 1987.

\bibitem{CS87:tac}
J.~E.~G.~Collins and R.~E. Skelton, ``A theory of state covariance assignment
  for discrete systems,'' {\em IEEE Transactions on Automatic Control},
  vol.~32, no.~1, pp.~35--41, 1987.

\bibitem{WD90:tac}
M.~A. Wicks and R.~A. Decarlo, ``Gramian assignment based on the {L}yapunov
  equation,'' {\em IEEE Transactions on Automatic Control}, vol.~35, no.~4,
  pp.~465--468, 1990.

\bibitem{GS97:auto}
K.~M. Grigoriadis and R.~E. Skelton, ``Minimum-energy covariance controllers,''
  {\em Automatica}, vol.~33, no.~4, pp.~569--578, 1997.

\bibitem{WHH01:tsp}
Z.~Wang, B.~Huang, and P.~Huo, ``Sampled-data filtering with error covariance
  assignment,'' {\em IEEE Transactions on Signal Processing}, vol.~49, no.~3,
  pp.~666--670, 2001.

\bibitem{KY12:pra}
K.~Koga and N.~Yamamoto, ``Dissipation-induced pure {G}aussian state,'' {\em
  Physical Review A}, vol.~85, no.~2, p.~022103, 2012.

\bibitem{Y12:ptrsa}
N.~Yamamoto, ``Pure {G}aussian state generation via dissipation: a quantum
  stochastic differential equation approach,'' {\em Philosophical Transactions
  of the Royal Society A: Mathematical, Physical and Engineering Sciences},
  vol.~370, no.~1979, pp.~5324--5337, 2012.

\bibitem{IY13:pra}
Y.~Ikeda and N.~Yamamoto, ``Deterministic generation of {G}aussian pure states
  in a quasilocal dissipative system,'' {\em Physical Review A}, vol.~87,
  no.~3, p.~033802, 2013.

\bibitem{MWPY14:msc}
S.~Ma, M.~J. Woolley, I.~R. Petersen, and N.~Yamamoto, ``Preparation of pure
  {G}aussian states via cascaded quantum systems,'' in {\em Proceedings of IEEE
  Conference on Control Applications (CCA)}, pp.~1970--1975, October 2014.

\bibitem{BP03:book}
S.~L. Braunstein and A.~K. Pati, {\em Quantum Information with Continuous
  Variables}.
\newblock Springer, 2003.

\bibitem{P11:auto}
I.~R. Petersen, ``Cascade cavity realization for a class of complex transfer
  functions arising in coherent quantum feedback control,'' {\em Automatica},
  vol.~47, no.~8, pp.~1757--1763, 2011.

\bibitem{P12:scl}
I.~R. Petersen, ``Low frequency approximation for a class of linear quantum
  systems using cascade cavity realization,'' {\em Systems $\&$ Control
  Letters}, vol.~61, no.~1, pp.~173--179, 2012.

\bibitem{GY16:tac}
M.~Gu{\c{t}}{\u{a}} and N.~Yamamoto, ``System identification for passive linear
  quantum systems,'' {\em IEEE Transactions on Automatic Control}, vol.~61,
  no.~4, pp.~921--936, 2016.

\bibitem{GZ15:auto}
J.~E. Gough and G.~Zhang, ``On realization theory of quantum linear systems,''
  {\em Automatica}, vol.~59, pp.~139--151, 2015.

\bibitem{N10:tac}
H.~I. Nurdin, ``On synthesis of linear quantum stochastic systems by pure
  cascading,'' {\em IEEE Transactions on Automatic Control}, vol.~55, no.~10,
  pp.~2439--2444, 2010.

\bibitem{JNP08:tac}
M.~R. James, H.~I. Nurdin, and I.~R. Petersen, ``${H}_\infty$ control of linear
  quantum stochastic systems,'' {\em IEEE Transactions on Automatic Control},
  vol.~53, no.~8, pp.~1787--1803, 2008.

\bibitem{NJP09:auto}
H.~I. Nurdin, M.~R. James, and I.~R. Petersen, ``Coherent quantum {LQG}
  control,'' {\em Automatica}, vol.~45, no.~8, pp.~1837--1846, 2009.

\bibitem{NJD09:siamjco}
H.~I. Nurdin, M.~R. James, and A.~C. Doherty, ``Network synthesis of linear
  dynamical quantum stochastic systems,'' {\em SIAM Journal on Control and
  Optimization}, vol.~48, no.~4, pp.~2686--2718, 2009.

\bibitem{GJ09:tac}
J.~Gough and M.~R. James, ``The series product and its application to quantum
  feedforward and feedback networks,'' {\em IEEE Transactions on Automatic
  Control}, vol.~54, no.~11, pp.~2530--2544, 2009.

\bibitem{TN15:scl}
O.~Techakesari and H.~I. Nurdin, ``On the quasi-balanceable class of linear
  quantum stochastic systems,'' {\em Systems $\&$ Control Letters}, vol.~78,
  pp.~25--31, 2015.

\bibitem{OHY11:cdc}
K.~Ohki, S.~Hara, and N.~Yamamoto, ``On quantum-classical equivalence for
  linear systems control problems and its application to quantum entanglement
  assignment,'' in {\em Proceedings of IEEE 50th Annual Conference on Decision
  and Control (CDC)}, pp.~6260--6265, December 2011.

\bibitem{MFL11:pra}
N.~C. Menicucci, S.~T. Flammia, and P.~van Loock, ``Graphical calculus for
  {G}aussian pure states,'' {\em Physical Review A}, vol.~83, no.~4, p.~042335,
  2011.

\bibitem{HJ12:book}
R.~A. Horn and C.~R. Johnson, {\em Matrix Analysis}.
\newblock Cambridge University Press, 2~ed., 2012.

\bibitem{H77:jrnbs}
J.~Z. Hearon, ``Minimum polynomials and control in linear systems,'' {\em
  Journal of Research of the National Bureau of Standards}, vol.~82, no.~2,
  pp.~129--132, 1977.

\bibitem{ZJK96:book}
K.~Zhou, J.~C. Doyle, and K.~Glover, {\em Robust and Optimal Control}.
\newblock Prentice Hall, 1996.

\bibitem{WC14:pra}
M.~J. Woolley and A.~A. Clerk, ``Two-mode squeezed states in cavity
  optomechanics via engineering of a single reservoir,'' {\em Physical Review
  A}, vol.~89, no.~6, p.~063805, 2014.

\bibitem{VW02:pra}
G.~Vidal and R.~F. Werner, ``Computable measure of entanglement,'' {\em
  Physical Review A}, vol.~65, no.~3, p.~032314, 2002.

\bibitem{ODPCWS16:prl}
C.~F. Ockeloen-Korppi, E.~Damsk\"agg, J.-M. Pirkkalainen, A.~A. Clerk, M.~J.
  Woolley, and M.~A. Sillanp\"a\"a, ``Quantum back-action evading measurement
  of collective mechanical modes,'' {\em Physical Review Letters}, vol.~117,
  no.~14, p.~140401, 2016.

\bibitem{NTMPS17:pnas}
W.~H.~P. Nielsen, Y.~Tsaturyan, C.~B. M{\o}ller, E.~S. Polzik, and
  A.~Schliesser, ``Multimode optomechanical system in the quantum regime,''
  {\em Proceedings of the National Academy of Sciences}, vol.~114, no.~1,
  pp.~62--66, 2017.

\bibitem{MWPY17:jpa}
S.~Ma, M.~J. Woolley, I.~R. Petersen, and N.~Yamamoto, ``Pure {G}aussian states
  from quantum harmonic oscillator chains with a single local dissipative
  process,'' {\em Journal of Physics A: Mathematical and Theoretical}, vol.~50,
  no.~13, p.~135301, 2017.

\end{thebibliography}

\end{document}